\newtheorem{theorem}{Theorem}[section]
\newtheorem{lemma}[theorem]{Lemma}
\newtheorem{corollary}[theorem]{Corollary}
\newtheorem{observation}[theorem]{Observation}
\newtheorem*{rep@theorem}{\rep@title}
\newcommand{\newreptheorem}[2]{%
\newenvironment{rep#1}[1]{%
 \def\rep@title{#2 \ref{##1}}%
 \begin{rep@theorem}}%
 {\end{rep@theorem}}}
\newcommand{\defcal}[1]{\expandafter\newcommand\csname c#1\endcsname{{\mathcal{#1}}}}
\newcommand{\defbb}[1]{\expandafter\newcommand\csname b#1\endcsname{{\mathbb{#1}}}}
\newcounter{calBbCounter}
    \edef\letter{\Alph{calBbCounter}}
\newcommand{\ie}{{\it i.e.}}
\newcommand{\nnR}{{\bR_{\geq 0}}}
\newcommand{\InDegree}{{d_{\mathrm{in}}}}
\newcommand{\OutDegree}{{d_{\mathrm{out}}}}
\DeclareMathOperator*{\argmax}{arg\,max}
\begin{document}

\title{Submodularity on Hypergraphs: From Sets to Sequences}

\author{\name Marko Mitrovic \email marko.mitrovic@yale.edu \\
       Yale University\\
       \AND
       \name Moran Feldman \email moranfe@openu.ac.il \\
       Open University of Israel\\
       \AND
       \name Andreas Krause \email krausea@ethz.ch \\
       ETH Zurich\\
       \AND
       \name Amin Karbasi \email amin.karbasi@yale.edu\\
       Yale University}
\maketitle


\begin{abstract}

In a nutshell, submodular functions encode an intuitive notion of diminishing returns. As a result, submodularity appears in many important machine learning tasks such as feature selection and data summarization. Although there has been a large volume of work devoted to the study of submodular functions in recent years, the vast majority of this work has been focused on algorithms that output sets, not sequences. However, in many settings, the order in which we output items can be just as important as the items themselves. 

To extend the notion of submodularity to sequences, we use a directed graph on the items where the edges encode the additional value of selecting items in a particular order. Existing theory is limited to the case where this underlying graph is a directed acyclic graph. In this paper, we introduce two new algorithms that provably give constant factor approximations for general graphs and hypergraphs having bounded in or out degrees. Furthermore, we show the utility of our new algorithms for real-world applications in movie recommendation, online link prediction, and the design of course sequences for MOOCs.
\end{abstract}
\section{Introduction}

\subsection{Preliminaries and Related Work}

Intuitively, submodularity describes the set of functions that exhibit diminishing returns. Mathematically, a set function $f : 2^{V} \to \mathbb{R}$ is \textbf{submodular} if, for every two sets $A \subseteq B \subseteq V$ and element $v \in V \setminus B$, we have $f(A \cup \{v\}) - f(A) \ge f(B \cup \{v\}) - f(B)$. That is, the marginal contribution of any element $v$ to the value of $f(A)$ diminishes as the set $A$ grows. 

As such, submodularity commonly appears in a~wide variety of fields including machine learning, combinatorial optimization, economics, and beyond. Sample applications include variable selection~\citep{krause05near}, data summarization~\citep{mirzasoleiman16distributed,lin2011class,kirchhoff2014submodularity}, recommender systems~\citep{GabillonKWEM2013}, crowd teaching~\citep{singla2014near}, neural network interpretability~\citep{elenberg17}, network monitoring~\citep{gomez10}, and influence maximization in social networks~\citep{kempe03}.

A submodular function $f$ is said to be \textbf{monotone} if $f(A) \leq f(B)$ for every two sets $A \subseteq B \subseteq V$. That is, adding items to a set cannot decrease its value. A seminal result in submodularity states that if our utility function $f$ is monotone submodular (and non-negative), then the classical greedy algorithm maximizes $f$ subject to a cardinality constraint up to an approximation ratio of $1 - 1/e$~\citep{nemhauser78}.  Since then, the study of submodular functions has been extended to a broad variety of different settings, including non-monotone submodularity~\citep{feige07maximizing,buchbinder2014submodular}, adaptive submodularity~\citep{golovin11}, weak submodularity~\citep{das2011submodular}, and continuous submodularity~\citep{wolsey82,bach2015}, just to name a few. 

Despite the above, the vast majority of existing results are limited to the scenario where we wish to output sets, not sequences. \cite{alaei10} and \cite{zhang16} consider functions they call string- or sequence-submodular, but it is in a different context. \cite{li17} look at a combination of submodularity and hypergraphs, but it is specifically within the context of hypergraph clustering. In this paper, we use a directed graph on the items where the edges encode the additional value of selecting items in a particular order. The only known theoretical result for this setting is limited to the case where the underlying graph is a directed acyclic graph~\citep{seq17}. Considering sequences instead of sets causes an exponential increase in the size of the search space, but it allows for much more expressive models. 

For example, consider the problem of recommending movies to a user. A recommendation system could determine that the user might be interested in The Lord of the Rings franchise. However, if the model does not consider the order of the movies it recommends, the user may watch The Return of the King first and The Fellowship of the Ring last, which is likely to make the user totally unsatisfied with an otherwise excellent recommendation. With this example as motivation, the next section gives a more detailed description of the problem we consider.

\subsection{Problem Description}

\citet{seq17} was the first to consider this particular submodular sequence setting and we will closely follow their setup of the problem. Recall that the goal is to select a \textbf{sequence} of items that will maximize some given objective function. To generalize the problem description, we will refer to items as vertices from now on.

Let $V = \{ v_1, v_2, \hdots, v_n \}$ be the set of $n$ vertices (items) we can pick from. A set of edges $E$ encodes the fact that there is additional value in picking certain vertices in a certain order. More specifically, an edge $e_{ij} = (v_i, v_j)$ encodes the fact that there is additional utility in selecting $v_j$ after $v_i$ has already been chosen. Self-loops (i.e., edges that begin and end at the same vertex) encode the fact that there is some individual utility in selecting a vertex.

In general, our input consists of a directed graph $G = (V, E)$, a non-negative monotone submodular set function $h \colon 2^E \to \nnR$, and a parameter $k$. The objective is to output a non-repeating sequence $\sigma$ of $k$ unique nodes that maximizes the objective function: 
\[
	f(\sigma)
	=
	h\big( E(\sigma) \big)
	\enspace,
\]
where
\[
E(\sigma) = \big\{ (\sigma_i, \sigma_j) \mid (\sigma_i, \sigma_j) \in E, i \leq j \big\}
\enspace.
\]

We say that $E(\sigma)$ is the set of edges induced by the sequence $\sigma$. It is important to note that the function $h$ is a submodular set function over the edges, not over the vertices. Furthermore, the objective function $f$ is neither a set function, nor is it necessarily submodular on the vertices.


\begin{figure}[h]
\vspace{.1in}
\begin{center}
\includegraphics[scale = 0.22]{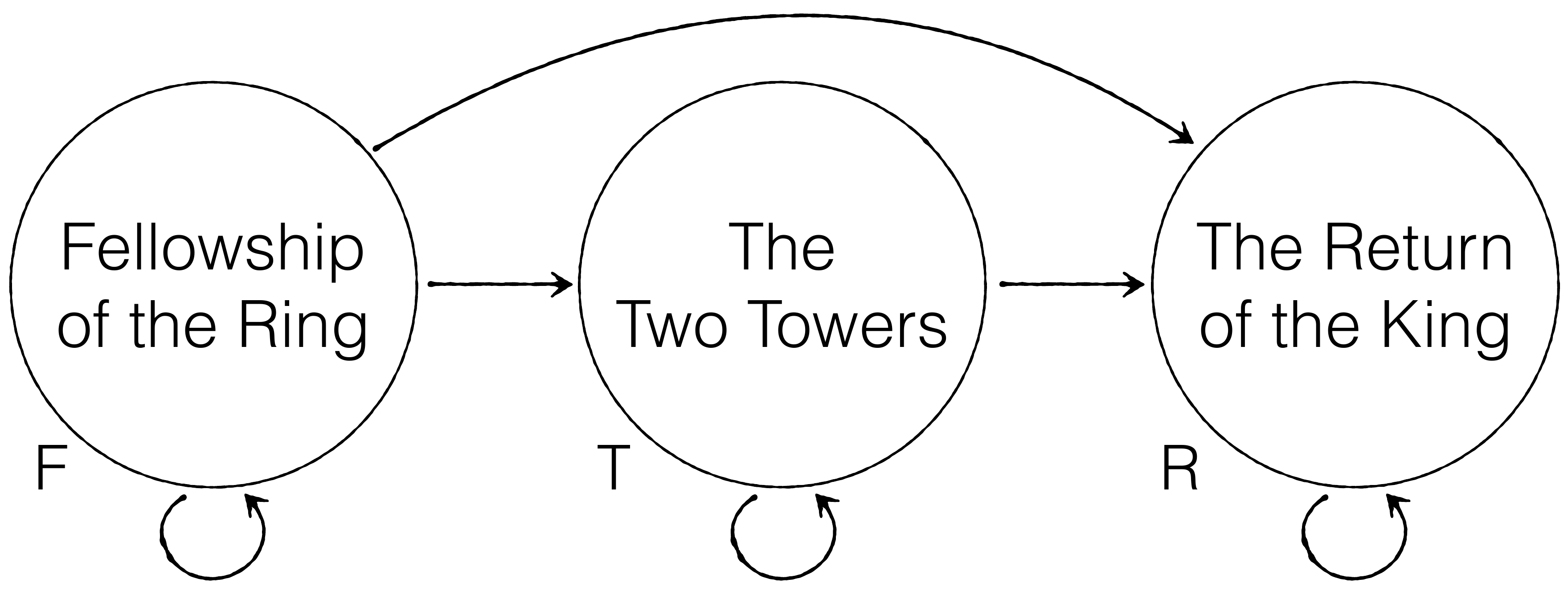}
\end{center}
\caption{Graph for The Lord of the Rings franchise. The self-loops encode the fact that each movie has some individual value. The edges encode the fact that there is additional utility in watching the movies in the correct order. Notice that the utility of watching The Return of the King after having already seen both The Fellowship of the Ring and The Two Towers is higher than the utility of watching The Return of the King after having seen just one of the two. }
\label{fig:lotr}
\end{figure}

For example, consider the graph in Figure~\ref{fig:lotr}, and let $h\big(E(\sigma) \big) = |E(\sigma)|$. That is, the value of a sequence is simply the number of edges induced by that sequence. Consider the sequence $\sigma_A = (F)$ where the user has watched only The Fellowship of the Ring, the sequence $\sigma_B = (T)$ where the user watched only The Two Towers, and the sequence $\sigma_C = (F,T)$ where the user watched The Fellowship of the Ring and then The Two Towers:\vspace{0.075in}

$f(\sigma_A) = f(F) = h\big( (F,F) \big) = 1 \enspace.$ \\
$f(\sigma_B) = f(T) = h\big( (T,T) \big) = 1 \enspace.$ \\
$f(\sigma_C) = f(F,T) = h\big( (F,F),(F,T),(T,T) \big) = 3\enspace.$ \vspace{0.075in}

This example shows that although the marginal gain of the edges is non-increasing in the context of a growing set of edges (i.e., the function $h$ is submodular on the edges), it is clear that the function $f$ is \emph{not} submodular on the vertices. In particular, the marginal gain of The Two Towers is larger once the user has already viewed The Fellowship of the Ring.

Furthermore, just to fully clarify the concept of edges being induced by a sequence, consider the sequence $\sigma_D = (T,F)$ where the user watched The Two Towers and then The Fellowship of the Ring. \vspace{0.075in}

$f(\sigma_D) = f(T,F) = h\big( (T,T),(F,F) \big) = 2 \enspace.$ 
\vspace{0.075in}

Notice that although sequences $\sigma_C$ and $\sigma_D$ contain the same movies, the order of $\sigma_D$ means that the edge $(F,T)$ is not induced, and thus, the value of the sequence is lower.

\subsection{Our Contributions}
\vspace{-0.1in}

Throughout this paper we use the notation $\Delta = \min\{ d_{\text{in}}, d_{\text{out}}\}$, where $d_{\text{in}} = \max_{v \in V}\InDegree(v)$ and $d_{\text{out}} = \max_{v \in V}\OutDegree(v)$. The previous work on our problem, due to~\citet{seq17}, presented an algorithm (OMegA) enjoying a $(1 - e^{-\frac{1}{2\Delta}})$-approximation guarantee when the underlying graph $G$ is a directed acyclic graph (except for self-loops).  

In this paper, we present two new algorithms: Sequence-Greedy and Hyper Sequence-Greedy, which also provably achieve constant factor approximations (when $\Delta$ is constant), but their guarantees hold for general graphs and hypergraphs, respectively. Although the example given in Figure \ref{fig:lotr} is indeed a directed acyclic graph, many real-world problems require a general graph or hypergraph.  

We showcase the utility of our algorithms on real world applications in movie recommendation, online link prediction, and the design of course sequences for massive open online courses (MOOCs). Furthermore, we show that even when the underlying graph is a directed acyclic graph, our general graph algorithm performs comparably well. Our experiments also demonstrate the power of being able to utilize hypergraphs and hyperedges.

 \paragraph{Paper Organization.} Our theoretical results are formally presented in Section~\ref{theory}, and their proofs can be found in Section~\ref{proofs}. Details about the real-world applications we consider and the experimental results we obtain for these applications appear in Section~\ref{apps}.

\newpage

\section{Theoretical Results} \label{theory}
\subsection{General Graphs}\label{general}

In this section, we present our first algorithm, Sequence-Greedy. Sequence-Greedy is essentially the same as the classical greedy algorithm, but instead of choosing the most valuable vertex at each step, it chooses the most valuable valid edge. 

More specifically, we start off with an empty sequence $\sigma$. At each step, we define $\cE$ to be the set of all edges whose end point is not already in $\sigma$. We then greedily select the edge $e_{ij} \in \cE$ with maximum marginal gain $h\big(e_{ij} \mid E(\sigma)\big)$, where
\[
h\big(e_{ij} \mid E(\sigma)\big) = h\big(E(\sigma) \cup e_{ij}  \big) - h\big(E(\sigma)\big)
\enspace.
\]
Recall that $e_{ij} = (v_i,v_j)$. That is, $v_i$ is the start point of $e_{ij}$ and $v_j$ is the endpoint. If $e_{ij}$ is a self-loop, then $j = i$ and we append the single vertex $v_j$ to $\sigma$. Similarly, if $j \neq i$, but $v_i$ is already in $\sigma$, then we still only append $v_j$. Finally, if $e_{ij}$ has two distinct vertices and neither of them is already in the sequence, we append $v_i$ and then $v_j$ to $\sigma$. This description is summarized in pseudo-code in Algorithm~\ref{alg:greedy_path}. 


\begin{algorithm}[h]
 \SetKwInOut{Input}{Input}
\caption{\textsf{Sequence-Greedy (Forward)}} \label{alg:greedy_path}
\DontPrintSemicolon
%

\Input{Directed graph $G = (V,E)$\\ Monotone submodular function $h: 2^E \rightarrow \mathbb{R}$\\ Cardinality parameter $k$}
Let $\sigma \gets ()$.\\
\While{$|\sigma| \leq k - 2$}
{
	$\cE = \{ e_{ij} \in E \mid v_j \notin \sigma \}$. \tcp*{$e_{ij} = (v_i,v_j)$} 
	\lIf{$\cE = \varnothing$}{Exit the loop.}
	$e_{ij} = \argmax_{e \in \cE} h(e \mid E(\sigma))$. \\
	\eIf{$v_j = v_i$ \textbf{or} $v_i \in \sigma$}{$\sigma = \sigma \oplus v_j$. 	 \tcp*{$\oplus$ means concatenate}} 
	{$\sigma = \sigma \oplus v_i \oplus v_j$.}
}

\Return{$\sigma$}.
\end{algorithm}

\begin{theorem} \label{thm:graphs}
The approximation ratio of Algorithm~\ref{alg:greedy_path} is at least $\frac{1 - e^{-(1 - \frac{1}{k})}}{2d_{\text{in}} + 1}$.
\end{theorem}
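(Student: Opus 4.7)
The plan is to adapt the classical $(1-1/e)$ analysis of greedy monotone submodular maximization, handling the two complications specific to this setting: the algorithm picks edges rather than vertices, and once a vertex is appended to $\sigma$ (even only as the start point of some chosen edge) every edge terminating at that vertex becomes permanently ineligible. I would fix an optimal sequence $\sigma^*$ of length $k$, set $E^* = E(\sigma^*)$ (so that $|E^*| \le k\,d_{\text{in}}$), and write $\sigma_t$, $V_t = V(\sigma_t)$, and $g_t = f(\sigma_t) - f(\sigma_{t-1})$ for the greedy sequence after $t$ iterations, its vertex set, and its $t$-th marginal gain.

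For a given iteration $t$, I would partition $E^*$ into $\cA_t$, the edges whose endpoint is not yet in $V_{t-1}$ (and therefore still eligible for the greedy), and $\cB_t$, the remaining ``blocked'' edges. Monotonicity and submodularity of $h$ give
\[
f(\sigma^*) - f(\sigma_{t-1}) \;\le\; h(E^* \mid E(\sigma_{t-1})) \;\le\; \sum_{e \in \cA_t} h(e \mid E(\sigma_{t-1})) + \sum_{e \in \cB_t} h(e \mid E(\sigma_{t-1})).
\]
Because every edge in $\cA_t$ is an option for the greedy at step $t$, the first sum is at most $|\cA_t|\,g_t \le k\,d_{\text{in}}\,g_t$. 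The main obstacle is the second sum, which I would control by a charging argument: assign each $e \in \cB_t$ to the iteration $s(e) \le t-1$ at which its endpoint was first appended to $\sigma$. At that earlier iteration the endpoint was not yet in $V_{s(e)-1}$, so $e$ was eligible and greedy's actual choice gives $g_{s(e)} \ge h(e \mid E(\sigma_{s(e)-1})) \ge h(e \mid E(\sigma_{t-1}))$ by submodularity. Since each iteration appends at most two new vertices and each vertex is the endpoint of at most $d_{\text{in}}$ edges of $E^*$, at most $2d_{\text{in}}$ members of $\cB_t$ can be charged to any one iteration $s$, so the second sum is at most $2d_{\text{in}}\sum_{s < t} g_s = 2d_{\text{in}}\,f(\sigma_{t-1})$. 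Combining these yields the master inequality
\[
f(\sigma^*) \;\le\; (2d_{\text{in}} + 1)\,f(\sigma_{t-1}) + k\,d_{\text{in}}\,g_t.
\]

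Finally I would unroll this in the standard way. Letting $y_t = f(\sigma^*) - (2d_{\text{in}}+1)f(\sigma_t)$, the master inequality becomes $y_t \le y_{t-1}\bigl(1 - (2d_{\text{in}}+1)/(kd_{\text{in}})\bigr)$, hence $y_T \le f(\sigma^*)\exp\bigl(-T(2d_{\text{in}}+1)/(kd_{\text{in}})\bigr)$. The main loop of Algorithm~\ref{alg:greedy_path} runs until $|\sigma| \ge k-1$ and each iteration adds at most two vertices, so $T \ge (k-1)/2$, which gives $T(2d_{\text{in}}+1)/(kd_{\text{in}}) \ge (k-1)/k = 1 - 1/k$ and delivers the claimed ratio $\frac{1 - e^{-(1-1/k)}}{2d_{\text{in}}+1}$. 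The corner case where the loop exits early because $\cE$ becomes empty is simpler: every edge of $E^*$ then has its endpoint blocked, so $\cA_{T+1} = \emptyset$ in the master inequality and it collapses directly to $f(\sigma^*) \le (2d_{\text{in}}+1)f(\sigma_T)$, which is stronger than required.
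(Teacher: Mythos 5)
Your proposal is correct and follows essentially the same route as the paper's proof: your master inequality is exactly the second inequality of Lemma~\ref{lem:iteration_guarantee}, your charging of each blocked optimal edge to the iteration whose (at most two) newly appended vertices blocked it is the telescoped form of Lemma~\ref{lem:loss}, and the unrolling, the bound $\ell \ge (k-1)/2$, and the early-termination case all match the paper. The only detail the paper spells out that you gloss over is the degenerate case $2d_{\text{in}}+1 \ge d_{\text{in}}k$, where the factor $1 - (2d_{\text{in}}+1)/(kd_{\text{in}})$ is non-positive and the recurrence cannot be iterated; there a single application of the master inequality at $t=1$ already yields the claimed bound.
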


Notice that the approximation guarantee of Algorithm~\ref{alg:greedy_path} depends on the maximum in-degree $d_{\text{in}}$. Intuitively, this is because Algorithm~\ref{alg:greedy_path} builds $\sigma$ by appending vertices to the end of the sequence. This means that each vertex we add to $\sigma$ decreases the size of $\cE$ by at most $d_{\text{in}}$.

However, one can easily modify Algorithm~\ref{alg:greedy_path} to build $\sigma$ backwards by \emph{prepending} vertices to the start of the sequence at each step. More specifically, we redefine $\cE$ to be the set of all edges whose \emph{start point} is not already in $\sigma$. Again we greedily select the edge $e_{ij} \in \cE$ that maximizes $h\big(e_{ij} \mid E(\sigma)\big)$. Now, if $e_{ij}$ is a self-loop or $v_j$ is already in $\sigma$, we prepend the single vertex $v_i$ to the start of $\sigma$. Otherwise, if $e_{ij}$ has two distinct vertices and neither of them is already in the sequence, we prepend $v_j$ to $\sigma$ first, and then prepend $v_i$ (thus, maintaining the order). This description is summarized in pseudo-code in Algorithm~\ref{alg:greedy_path2} with the main differences noted as comments.

\begin{algorithm}[h]
 \SetKwInOut{Input}{Input}
\caption{\textsf{Sequence-Greedy (Backward)}} \label{alg:greedy_path2}
\DontPrintSemicolon
\Input{Directed graph $G = (V,E)$\\ Monotone submodular function $h: 2^E \rightarrow \mathbb{R}$\\ Cardinality parameter $k$}
Let $\sigma \gets ()$.\\
\While{$|\sigma| \leq k - 2$}
{
	$\cE = \{ e_{ij} \in E \mid v_i \notin \sigma \}$. \tcp*{different set $\cE$} 
	\lIf{$\cE = \varnothing$}{Exit the loop.}
	$e_{ij} = \argmax_{e \in \cE} h(e \mid E(\sigma))$.  \\
	\eIf{$v_i = v_j$ \textbf{or} $v_j \in \sigma$}{$\sigma = v_i \oplus \sigma$. } 
	{$\sigma = v_i \oplus v_j \oplus \sigma$.} \tcp*{vertices appended to beginning of $\sigma$} 

}

\Return{$\sigma$}.
\end{algorithm}

Algorithm~\ref{alg:greedy_path2} gives the same approximation ratio as Algorithm~\ref{alg:greedy_path}, but with a dependence on $d_{\text{out}}$ instead of $d_{\text{in}}$. Thus, if we run both the forwards and backwards version of Sequence-Greedy and take the maximum, we get an approximation ratio that depends on $\Delta = \min\{d_{\text{in}}, d_{\text{out}}\}$. Furthermore, notice that the approximation ratio improves as $k$ increases. Therefore, we can summarize the approximation ratio of Sequence-Greedy as follows.

\begin{theorem}
As $k \rightarrow \infty$, the approximation ratio of Sequence-Greedy approaches $\frac{1 - \frac{1}{e}}{2\Delta + 1}$.
\end{theorem}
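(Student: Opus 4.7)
The plan is to derive this statement as a direct corollary of Theorem~\ref{thm:graphs} combined with a symmetry argument for the backward variant. By Theorem~\ref{thm:graphs}, Algorithm~\ref{alg:greedy_path} already attains approximation ratio at least $\frac{1 - e^{-(1 - 1/k)}}{2d_{\text{in}} + 1}$ on the input $(G, h, k)$, so what remains is to establish the analogous bound $\frac{1 - e^{-(1 - 1/k)}}{2d_{\text{out}} + 1}$ for Algorithm~\ref{alg:greedy_path2}, then take the better of the two and pass to the limit in $k$.

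To handle the backward variant, I would introduce the reversed graph $G^R = (V, E^R)$, where $E^R = \{(v_j, v_i) : (v_i, v_j) \in E\}$, and define $h^R \colon 2^{E^R} \to \nnR$ by $h^R(S) = h(S^R)$, with $S^R$ denoting the set obtained by reversing every edge in $S$. Because edge-reversal is a bijection, $h^R$ inherits non-negativity, monotonicity, and submodularity from $h$. The key observation is that for any sequence $\sigma$, the set of edges induced in $G^R$ by the reversed sequence $\sigma^R$ is exactly $(E_G(\sigma))^R$, so $h^R(E_{G^R}(\sigma^R)) = h(E_G(\sigma))$. Thus the sequence optimization problem on $(G, h, k)$ and on $(G^R, h^R, k)$ are in perfect correspondence via $\sigma \mapsto \sigma^R$, and their optima agree.

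Next I would verify that running Algorithm~\ref{alg:greedy_path2} on $(G, h, k)$ produces the reverse of the sequence produced by running Algorithm~\ref{alg:greedy_path} on $(G^R, h^R, k)$. The condition ``$v_i \notin \sigma$'' on edges of $E$ used in Algorithm~\ref{alg:greedy_path2} becomes the condition ``endpoint not in $\sigma^R$'' on the corresponding edges of $E^R$ used in Algorithm~\ref{alg:greedy_path}, the marginal gains match because $h$ and $h^R$ agree on corresponding sets, and the prepending operations in Algorithm~\ref{alg:greedy_path2} become the appending operations in Algorithm~\ref{alg:greedy_path} once we pass to the reversed sequence. Since $d_{\text{in}}(G^R) = d_{\text{out}}(G)$, applying Theorem~\ref{thm:graphs} to $(G^R, h^R, k)$ yields the claimed bound for Algorithm~\ref{alg:greedy_path2}.

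Taking the better of the two outputs then gives approximation ratio at least $\frac{1 - e^{-(1 - 1/k)}}{2\min\{d_{\text{in}}, d_{\text{out}}\} + 1} = \frac{1 - e^{-(1 - 1/k)}}{2\Delta + 1}$, and letting $k \to \infty$ sends $1 - 1/k \to 1$, so the ratio tends to $\frac{1 - 1/e}{2\Delta + 1}$, as required. The only delicate point is the reversal argument: one must be careful that the bijection between sequences, the relabeling of edges, and the match between prepending on $G$ and appending on $G^R$ all line up so that every step of the proof of Theorem~\ref{thm:graphs} ports over without any loss in the approximation constants. Once that bookkeeping is confirmed, the rest is immediate.
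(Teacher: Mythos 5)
Your proposal is correct and follows essentially the same route as the paper, which states this theorem as a direct consequence of Theorem~\ref{thm:graphs}, the symmetric guarantee for the backward variant (with $d_{\text{out}}$ in place of $d_{\text{in}}$), taking the maximum of the two runs, and letting $k \to \infty$. Your explicit reduction via the reversed graph $G^R$ and the induced function $h^R$ is a clean formalization of the symmetry claim that the paper asserts without writing out, and the bookkeeping you describe (induced-edge correspondence $E_{G^R}(\sigma^R) = (E_G(\sigma))^R$, matching selection rules, prepend-versus-append) all checks out.
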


\newpage
This is comparable to the $(1 - e^{-\frac{1}{2\Delta}})$-approximation guarantee that is achieved by the existing algorithm OMegA, except that our guarantee is valid on general graphs, not just directed acyclic graphs.

In addition to this provable approximation ratio, Sequence-Greedy has the strong advantage of being computationally efficient. Both finding $\cE$ and identifying the most valuable edge in $\cE$ can be done in $O(m)$ time, where $m = |E|$. Thus, Sequence-Greedy runs in $O(km)$ time. This is faster than OMegA, which runs in $O(m\Delta k^2 \log k)$.

\vspace{-0.0in}
\subsection{Extension to Hypergraphs}
\vspace{-0.0in}

Extending our results to hypergraphs allows us to encode increasingly sophisticated models. For example, looking back on Figure \ref{fig:lotr}, we see that the~value of watching all three movies is just the sum of the pairwise additional values. However, hyperedges allow us to encode the fact that there is even further utility in watching the entire franchise in order. 

From this point on, we replace the directed graph $G$ with a directed hypergraph $H = (V,E)$. Each edge $e \in E$ of this directed hypergraph is a non-empty non-repeating sequence of vertices from $V$. Let $V(e)$ be the set of vertices found in the hyperedge $e$. We assume that the intersection of a sequence and a set maintains the order of the sequence, which allows us to redefine $E(\sigma)$ as
\[
	E(\sigma) = \{e \in E \mid \sigma \cap V(e) = e\} \enspace.
\]
Informally, $E(\sigma)$ contains an edge $e \in E$ if and only if all the vertices of $e$ appear in $\sigma$ in the proper order.

We also need to explain how the concept of in-degrees and out-degrees extends to hypergraphs. Self-loops contribute 1 to both the in-degree and the out-degree of that vertex. For all other edges $e \in E$ such that $v \in V(e)$, they will contribute 1 to $\InDegree(v)$ if $v$ is not the first vertex of $e$, and 1 to $\OutDegree(v)$ if $v$ is not the last vertex of $e$. Finally, we define $r$ as the maximum size of any edge in $E$. More formally, $r = \max_{e \in E} |e|$.

Aside from the above redefinition of $E(\sigma)$, there is no need to make other changes in the definition of the objective function $f$. Specifically, it is still defined as $f(\sigma) = h(E(\sigma))$, where $h\colon 2^E \to \nnR$ is a non-negative monotone submodular function. 


Our algorithm for hypergraphs, Hyper Sequence-Greedy, is an extension of the original Sequence-Greedy. Again, we start off with an empty sequence $\sigma$. This time, at each step we define $\cE$ to be the set of all hyperedges $e \in E$ such that $\sigma \cap V(e)$ is a prefix of $e$. The idea is that we can only select a hyperedge $e$ if the vertices of $e$ already in our sequence $\sigma$ form a prefix of $e$, and they appear in $\sigma$ in the right order. We then select the hyperedge $e^{*} \in \cE$ that has the maximum marginal gain, and append the vertices of $e^*$ (that are not already in our sequence) to $\sigma$ without changing their order. This description is summarized in pseudo-code in Algorithm~\ref{alg:greedy_path_hyper}.


\begin{algorithm}[ht]
 \SetKwInOut{Input}{Input}
\caption{\textsf{Hyper Sequence-Greedy (Forward)}} \label{alg:greedy_path_hyper}
\DontPrintSemicolon
%

 \Input{Directed hypergraph $H = (V,E)$\\ Monotone submodular function $h$\\ Cardinality parameter $k$}
Let $\sigma \gets ()$.\\
\While{$|\sigma| \leq k - r$}
{
	Let $\cE = \{e \in E \mid \sigma \cap V(e) \text{ is a prefix of } e\}$.\\
	\lIf{$\cE = \varnothing$}{Exit the loop.}
	$e^* = \argmax_{e \in \cE} h(e \mid E(\sigma))$.  \\
	\For{every $v \in e^*$ in order}{\lIf{$v \notin \sigma$}{$\sigma = \sigma \oplus v$. }}
	
}

\Return{$\sigma$}.
\end{algorithm}

\begin{theorem} \label{thm:hyper}
The approximation ratio of Algorithm~\ref{alg:greedy_path_hyper} is at least $\frac{1 - e^{-(1 - \frac{r}{k})}}{r\InDegree + 1}$.
\end{theorem}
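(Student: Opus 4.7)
The plan is to adapt the classical $(1-1/e)$ analysis for greedy monotone submodular maximization, with corrections for two new phenomena introduced by the sequence constraint: many OPT edges may be unavailable to greedy at any given step, and greedy's guarantee only compares it to edges currently in $\cE$. Fix an optimal sequence $\sigma^*$, let $O = E(\sigma^*)$, and let $\sigma^g_t$, $F_t = h(E(\sigma^g_t))$, and $\alpha_{t+1} = h(e^*_{t+1} \mid E(\sigma^g_t))$ denote the greedy sequence, its value, and its $(t{+}1)$-th marginal gain. Monotonicity and submodularity give
\[
f(\sigma^*) - F_t \;\le\; \sum_{e \in O \setminus E(\sigma^g_t)} h\bigl(e \mid E(\sigma^g_t)\bigr),
\]
and I would split this sum according to whether $e$ is \emph{available} (i.e., in $\cE$ at the start of iteration $t{+}1$) or \emph{blocked}. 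Each available OPT edge contributes at most $\alpha_{t+1}$ by the greedy rule, and there are at most $|O| \le k\InDegree$ such edges, because the last vertex of every $e \in O$ lies in $\sigma^*$ and each vertex is the last vertex of at most $\InDegree$ edges in $E$ (self-loops and edges ending at $v$ all contribute to $\InDegree(v)$).

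The main obstacle is bounding the marginals of the blocked OPT edges, which cannot be directly compared to $\alpha_{t+1}$. The idea is to charge each blocked $e$ to the earliest iteration $s^*(e) \le t$ at which $e$ first became blocked. Just before that iteration, $\sigma^g_{s^*(e)-1} \cap V(e)$ was still a prefix of $e$, so $e \in \cE$ at step $s^*(e)$ and greedy's choice of $e^*_{s^*(e)}$ yields $h\bigl(e \mid E(\sigma^g_{s^*(e)-1})\bigr) \le \alpha_{s^*(e)}$; submodularity extends this to $h\bigl(e \mid E(\sigma^g_t)\bigr) \le \alpha_{s^*(e)}$. On the other side of the charge, iteration $s$ appends at most $r$ new vertices to $\sigma^g$, and any appended vertex can be a non-first vertex of at most $\InDegree$ edges in $E$; thus at most $r\InDegree$ OPT edges satisfy $s^*(e) = s$. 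Summing over $s \le t$ and using $\sum_{s \le t} \alpha_s \le F_t$ (a telescoping consequence of $F_s \ge F_{s-1} + \alpha_s$) yields $\sum_{e \text{ blocked}} h(e \mid E(\sigma^g_t)) \le r\InDegree \cdot F_t$.

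Combining the two sums gives $f(\sigma^*) - F_t \le k\InDegree \cdot \alpha_{t+1} + r\InDegree \cdot F_t$, i.e.\ $\alpha_{t+1} \ge \bigl(f(\sigma^*) - (r\InDegree+1)F_t\bigr)/(k\InDegree)$. Writing $G_t = (r\InDegree+1)F_t$, this is the standard recurrence $f(\sigma^*) - G_{t+1} \le \bigl(1 - (r\InDegree+1)/(k\InDegree)\bigr)\bigl(f(\sigma^*) - G_t\bigr)$, whose iteration gives $f(\sigma^*) - G_T \le e^{-T(r\InDegree+1)/(k\InDegree)} f(\sigma^*)$. Finally, the loop condition $|\sigma| \le k-r$ together with $|e^*_s| \le r$ force $T \ge (k-r)/r$, so the exponent is at least $\tfrac{(k-r)(r\InDegree+1)}{kr\InDegree} = (1 - r/k)(1 + 1/(r\InDegree)) \ge 1 - r/k$, which after dividing by $r\InDegree+1$ gives $F_T \ge \tfrac{1-e^{-(1-r/k)}}{r\InDegree+1}\, f(\sigma^*)$, as claimed.
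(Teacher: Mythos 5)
Your argument is, at its core, the same as the paper's. The paper's Lemma~\ref{lem:loss_hyper} is exactly the aggregate form of your per-edge charging scheme: it tracks the potential $h((E(\sigma^*)\cap\cE_s)\cup E(\sigma_{s-1}))$ and shows it drops by at most $r\InDegree\cdot h(e_s\mid E(\sigma_{s-1}))$ per iteration, which is your ``at most $r\InDegree$ OPT edges become newly blocked per step, each with marginal dominated by the greedy gain of that step.'' The resulting bound $\alpha_{t+1}\ge\bigl(f(\sigma^*)-(r\InDegree+1)F_t\bigr)/(k\InDegree)$ is precisely the second inequality of Lemma~\ref{lem:iteration_guarantee_hyper}, and your justifications of $|E(\sigma^*)|\le k\InDegree$ and of the $r\InDegree$ bound (an appended vertex can evict an edge from $\cE$ only if it is a non-first vertex of that edge) are both correct and match Observation~\ref{obs:size} and Lemma~\ref{lem:loss_hyper}.

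There is, however, one step that fails as written: ``the loop condition $|\sigma|\le k-r$ together with $|e^*_s|\le r$ force $T\ge (k-r)/r$.'' Algorithm~\ref{alg:greedy_path_hyper} has a second exit, namely when $\cE=\varnothing$, and in that case $T$ can be far smaller than $(k-r)/r$, so the exponential bound you derive gives nothing. The paper treats this as a separate case, and your own decomposition already contains the fix: when $\cE_{T+1}=\varnothing$ there are no available OPT edges at all, so your inequality $f(\sigma^*)-F_T\le k\InDegree\,\alpha_{T+1}+r\InDegree\,F_T$ degenerates to $f(\sigma^*)-F_T\le r\InDegree\,F_T$, i.e.\ $F_T\ge f(\sigma^*)/(r\InDegree+1)$, which is stronger than the claimed ratio. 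A second, more minor omission is the degenerate regime $r\InDegree+1\ge k\InDegree$, where the contraction factor $1-(r\InDegree+1)/(k\InDegree)$ is nonpositive and the iterated bound $(1-c)^T\le e^{-cT}$ cannot be applied; there a single application of the recurrence already yields $F_T\ge f(\sigma^*)/(k\InDegree)\ge f(\sigma^*)/(r\InDegree+1)$, as in the paper's case analysis. With these two cases added, your proof is complete.
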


As with Sequence-Greedy, we can also run Hyper Sequence-Greedy backwards and take the maximum of the two results. In the backwards version, we \emph{prepend} the vertices to the start of the sequence and we can only select a hyperedge $e$ if $V(e) \cap \sigma$ is a suffix of $e$. Once more, this improves the approximation ratio in the sense that the dependence on $\InDegree$ is replaced with a dependence on $\Delta = \min\{\InDegree,\OutDegree\}$. Additionally notice that, as before, our approximation ratio improves as $k$ increases. Thus, we can summarize the performance guarantee of Hyper Sequence-Greedy as follows.

\begin{theorem}
As $k \rightarrow \infty$, the approximation ratio of Hyper Sequence-Greedy approaches $\frac{1 - \frac{1}{e}}{r\Delta + 1}$.
\end{theorem}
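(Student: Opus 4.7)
The plan is to derive this statement as a direct corollary of Theorem~\ref{thm:hyper} together with a symmetry argument for the backward variant of Hyper Sequence-Greedy. Concretely, I would proceed in three short steps.

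First, I would invoke Theorem~\ref{thm:hyper} to assert that the forward version of Hyper Sequence-Greedy achieves approximation ratio at least $\frac{1 - e^{-(1 - r/k)}}{r\InDegree + 1}$. Second, I would establish the analogous bound $\frac{1 - e^{-(1 - r/k)}}{r\OutDegree + 1}$ for the backward version. The cleanest way to do this is by a ``reverse the hypergraph'' reduction: define $H^R = (V, E^R)$ where each hyperedge $e = (v_{i_1}, \ldots, v_{i_t}) \in E$ is replaced by its reversal $e^R = (v_{i_t}, \ldots, v_{i_1})$, and define $h^R$ on $2^{E^R}$ by $h^R(S) = h(\{e : e^R \in S\})$. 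A sequence $\sigma$ induces edge $e^R$ in $H^R$ if and only if its reversal $\sigma^R$ induces $e$ in $H$, so maximizing $f^R(\sigma) = h^R(E^R(\sigma))$ on $H^R$ is equivalent to maximizing $f$ on $H$ under the reversal map. Furthermore, running forward Hyper Sequence-Greedy on $H^R$ produces (after reversal of its output) precisely the backward Hyper Sequence-Greedy on $H$, and the in-degrees in $H^R$ are exactly the out-degrees in $H$. Applying Theorem~\ref{thm:hyper} to $H^R$ therefore yields the claimed bound with $\OutDegree$ in place of $\InDegree$.

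Third, taking the better of the two runs gives approximation ratio at least $\frac{1 - e^{-(1 - r/k)}}{r \min\{\InDegree, \OutDegree\} + 1} = \frac{1 - e^{-(1 - r/k)}}{r\Delta + 1}$. Letting $k \to \infty$ with $r$ fixed, the exponent $1 - r/k$ tends to $1$, so $1 - e^{-(1-r/k)} \to 1 - 1/e$, and the approximation ratio tends to $\frac{1 - 1/e}{r\Delta + 1}$ as claimed.

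The only non-routine ingredient is the second step: carefully verifying that the forward algorithm applied to $H^R$ really does produce the reversed output of the backward algorithm on $H$, including for self-loops and for hyperedges that partially overlap with the current sequence. Since both algorithms choose at each step the hyperedge maximizing the same marginal $h(e \mid E(\sigma))$ (up to the reversal correspondence) among hyperedges whose already-placed vertices form a prefix (respectively suffix) of the edge, this identification is immediate once the reversal bijection is set up, and the rest of the argument is just the limit computation above.
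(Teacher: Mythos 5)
Your proposal is correct and follows essentially the same route the paper takes: the paper likewise obtains this statement by combining Theorem~\ref{thm:hyper} for the forward run with the symmetric bound (with $\OutDegree$ in place of $\InDegree$) for the backward run, taking the better of the two, and letting $k \to \infty$. Your explicit ``reverse the hypergraph'' reduction just formalizes the symmetry argument the paper states in prose, and your verification that in-degrees of $H^R$ equal out-degrees of $H$ is exactly the needed observation.
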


\noindent \textbf{Remarks:} One can observe that this hypergraph setting is a generalization of the previous directed graph setting. Specifically, Sequence-Greedy and the associated theory is a special case of Hyper Sequence-Greedy for $r=2$. Furthermore, if $r=1$ (i.e., our graph has only self-loops) then Hyper Sequence-Greedy is the same as the classical greedy algorithm.

We also note that while Algorithm~\ref{alg:greedy_path_hyper} may select fewer than $k$ vertices, the theoretical guarantees still hold. Furthermore, since we assume that $h$ is monotone, we can safely select $k$ vertices in practice every time. One simple heuristic for extending $\sigma$ to $k$ vertices is to only consider hyperedges with at most $k - |\sigma|$ vertices.

\section{Proofs} \label{proofs}

\subsection{Proof of Theorem~\ref{thm:graphs}} \label{app:graphs}

In this section we prove Theorem~\ref{thm:graphs}, however, before we get into the proof, let us first recall the theorem itself.

\begin{reptheorem}{thm:graphs}
The approximation ratio of Algorithm~\ref{alg:greedy_path} is at least $\frac{1 - e^{-(1 - \frac{1}{k})}}{2d_{\text{in}} + 1}$.
\end{reptheorem}

We begin the proof of the theorem by defining some additional notation. First, let $\ell$ be the number of iterations completed by the main loop of Algorithm~\ref{alg:greedy_path}, \ie, the number of iterations in which $\sigma$ is updated. Then, for every $0 \leq s \leq \ell$, let $\sigma_s$ be the value of $\sigma$ after $s$ iterations of this loop have been performed. In other words, $\sigma_0$ is the initial value of $\sigma$ when we first get to the loop, $\sigma_1$ is the value of $\sigma$ at the end of the first iteration of the loop, and so on. Note that $\sigma_\ell$ is the output of Algorithm~\ref{alg:greedy_path}. Additionally, we also denote by $e_s$ and $\cE_s$, for every $1 \leq s \leq \ell$, the values assigned to the variables $e_{ij}$ and $\cE$, respectively, at iteration number $s$ of the above loop. Finally, we also define $\ell'$ as the real number of iterations performed by the above loop. Notice that $\ell' = \ell$ unless the algorithm exits the loop because $\cE = \varnothing$, in which case $\ell' = \ell + 1$ and we define $\cE_{\ell'} = \varnothing$.

\begin{observation} \label{obs:gain}
For every $1 \leq s \leq \ell$, $f(\sigma_s) - f(\sigma_{s - 1}) \geq h(e_s \mid E(\sigma_{s - 1}))$.
\end{observation}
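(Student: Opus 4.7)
The plan is to unwind the definition $f(\sigma)=h(E(\sigma))$ and show that the edge set induced by $\sigma_s$ contains both $E(\sigma_{s-1})$ and the chosen edge $e_s$; monotonicity of $h$ then yields the claimed inequality immediately.

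First I would note that in every case of the \textbf{if/else} branch, the algorithm only \emph{appends} vertices to the tail of $\sigma_{s-1}$, so $\sigma_{s-1}$ is a prefix of $\sigma_s$. From the definition of $E(\cdot)$, any edge $(\sigma_i,\sigma_j)\in E$ with $i\le j$ witnessed in $\sigma_{s-1}$ is still witnessed (at the same positions) in $\sigma_s$. Hence
\[
E(\sigma_{s-1})\subseteq E(\sigma_s).
\]

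Next I would verify that $e_s\in E(\sigma_s)$ by walking through the three relevant cases of the algorithm. Write $e_s=(v_i,v_j)$. If $v_j=v_i$ (a self-loop), the algorithm appends $v_j$, so both endpoints are at the same position at the end of $\sigma_s$ and $e_s\in E(\sigma_s)$. If $v_j\neq v_i$ and $v_i\in\sigma_{s-1}$, the algorithm appends $v_j$; the position of $v_i$ in $\sigma_s$ precedes that of $v_j$, so $e_s\in E(\sigma_s)$. If $v_j\neq v_i$ and $v_i\notin\sigma_{s-1}$, the algorithm appends $v_i$ and then $v_j$; again $v_i$ precedes $v_j$ in $\sigma_s$, so $e_s\in E(\sigma_s)$. (The edge-selection step guarantees $v_j\notin\sigma_{s-1}$, which ensures the insertions above are well-defined and give $i<j$ in $\sigma_s$ whenever $v_i\neq v_j$.) Combining with the prefix observation,
\[
E(\sigma_{s-1})\cup\{e_s\}\subseteq E(\sigma_s).
\]

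Finally, monotonicity of $h$ gives
\[
f(\sigma_s)-f(\sigma_{s-1})
= h(E(\sigma_s))-h(E(\sigma_{s-1}))
\ge h\bigl(E(\sigma_{s-1})\cup\{e_s\}\bigr)-h(E(\sigma_{s-1}))
= h(e_s\mid E(\sigma_{s-1})),
\]
which is the claim. The only real obstacle is making the case analysis bookkeeping in the middle step airtight—specifically, checking that in each branch $e_s$ really is induced by $\sigma_s$ with the correct order of endpoints; everything else is just monotonicity applied to a superset containment.
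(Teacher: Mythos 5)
Your proposal is correct and follows essentially the same route as the paper: establish $E(\sigma_{s-1})\cup\{e_s\}\subseteq E(\sigma_s)$ (via the prefix property and the update rule) and then apply monotonicity of $h$. The only difference is that you spell out the case analysis showing $e_s\in E(\sigma_s)$, which the paper states without elaboration.
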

\begin{proof}
Notice that the way $\sigma$ is updated in each iteration of Algorithm~\ref{alg:greedy_path} guarantees that $e_s \in E(\sigma_s) \setminus E(\sigma_{s - 1})$. Moreover, since $\sigma_{s - 1}$ is a prefix of $\sigma_s$, we also get $E(\sigma_{s - 1}) \subseteq E(\sigma_s)$. Thus,
\[
	f(\sigma_s) - f(\sigma_{s - 1})
	=
	h(E(\sigma_s)) - h(E(\sigma_{s - 1}))
	\geq
	h(E(\sigma_{s - 1}) + e_s) - h(E(\sigma_{s - 1}))
	=
	h(e_s \mid E(\sigma_{s - 1}))
	\enspace,
\]
where the inequality follows from the monotonicity of $h$.
\end{proof}

Let $\sigma^*$ denote an arbitrary (but fixed) optimal sequence. We now need to prove a few properties of $\sigma^*$.

\begin{observation} \label{obs:size}
$|E(\sigma^*)| \leq \InDegree k$.
\end{observation}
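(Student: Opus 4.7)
The plan is to bound $|E(\sigma^*)|$ by charging each edge of $E(\sigma^*)$ to its endpoint vertex and then summing over the (at most $k$) vertices of $\sigma^*$. Concretely, recall that $E(\sigma^*) = \{(v_i,v_j) \in E \mid v_i,v_j \in \sigma^*, \text{ and } v_i \text{ precedes } v_j \text{ (or } i=j\text{)}\}$. So for every edge $e_{ij} \in E(\sigma^*)$, the endpoint $v_j$ must appear in $\sigma^*$.

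I would then define, for each vertex $v \in \sigma^*$, the charge set $C(v) = \{e_{ij} \in E(\sigma^*) \mid v_j = v\}$; the $C(v)$'s partition $E(\sigma^*)$ by construction. Each edge in $C(v)$ is an edge of the original graph $G$ whose endpoint is $v$, hence $|C(v)| \leq \InDegree(v) \leq \InDegree$ by definition of the in-degree (with self-loops contributing $1$ to the in-degree of their vertex, consistently with how they are counted here).

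Putting these together gives
\[
|E(\sigma^*)| = \sum_{v \in \sigma^*} |C(v)| \leq |\sigma^*| \cdot \InDegree \leq k \cdot \InDegree,
\]
since $\sigma^*$ is a non-repeating sequence of at most $k$ vertices. There is no real obstacle here beyond being careful with the convention that a self-loop $(v,v)$ is counted exactly once in $\InDegree(v)$ (and correspondingly appears in $C(v)$ exactly once when $v \in \sigma^*$), so the charging is well-defined. The symmetric bound in terms of $\OutDegree$ would be obtained identically by charging each edge to its start vertex, which is relevant to the backward variant of the algorithm.
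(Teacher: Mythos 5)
Your argument is correct and is essentially the paper's own proof, just spelled out more explicitly: the paper likewise notes that every arc of $E(\sigma^*)$ ends at one of the at most $k$ vertices of $\sigma^*$, each of which has in-degree at most $\InDegree$. Your extra care about self-loops and the symmetric $\OutDegree$ remark are fine but not a different approach.
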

\begin{proof}
Observe that $\sigma^*$ contains at most $k$ vertices because it is feasible. This means that there can be at most $\InDegree k$ arcs that end in a vertex of $\sigma^*$, which implies the observation since every arc of $E(\sigma^*)$ must end at a vertex of $\sigma^*$.
\end{proof}

The next lemma studies the change in the value of $(E(\sigma^*) \cap \cE_s) \cup E(\sigma_{s - 1})$ as a function of $s$.

\begin{lemma} \label{lem:loss}
For every $1 \leq s < \ell'$, $h((E(\sigma^*) \cap \cE_{s + 1}) \cup E(\sigma_s)) \geq h((E(\sigma^*) \cap \cE_s) \cup E(\sigma_{s-1})) - 2\InDegree \cdot h(e_s \mid \sigma_{s - 1})$.
\end{lemma}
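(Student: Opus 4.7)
The plan is to bound the difference $h((E(\sigma^*) \cap \cE_s) \cup E(\sigma_{s-1})) - h((E(\sigma^*) \cap \cE_{s+1}) \cup E(\sigma_s))$ directly. Write $A = E(\sigma^*) \cap \cE_s$, $A' = E(\sigma^*) \cap \cE_{s+1}$, $B = E(\sigma_{s-1})$, and $B' = E(\sigma_s)$. Because $\sigma_{s-1}$ is a prefix of $\sigma_s$, we have $B \subseteq B'$, and because $\cE_{s+1} \subseteq \cE_s$ (adding vertices to $\sigma$ only removes edges from $\cE$), we have $A' \subseteq A$. The target inequality rearranges to $h(A \cup B) - h(A' \cup B') \leq 2\InDegree \cdot h(e_s \mid B)$.

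First I would use monotonicity of $h$ to write $h(A \cup B) \leq h(A \cup B')$, reducing the problem to bounding $h(A \cup B') - h(A' \cup B')$. Since $A \supseteq A'$, this difference equals the marginal $h(A \setminus A' \mid A' \cup B')$. By subadditivity (a consequence of submodularity and non-negativity), this is at most $\sum_{e \in A \setminus A'} h(e \mid A' \cup B')$. Next, since $B = E(\sigma_{s-1}) \subseteq E(\sigma_s) = B' \subseteq A' \cup B'$, submodularity gives $h(e \mid A' \cup B') \leq h(e \mid B)$ for each such $e$. Finally, every edge $e \in A \setminus A' \subseteq \cE_s$ was a candidate in iteration $s$, so by the greedy choice $h(e \mid B) \leq h(e_s \mid B)$, giving the pointwise bound
\[
h(A \cup B) - h(A' \cup B') \leq |A \setminus A'| \cdot h(e_s \mid E(\sigma_{s-1})) \enspace.
\]

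The remaining task is to show $|A \setminus A'| \leq 2\InDegree$. Here I would argue combinatorially: an edge belongs to $\cE_s \setminus \cE_{s+1}$ exactly when its endpoint lies in $\sigma_s \setminus \sigma_{s-1}$. Iteration $s$ appends at most two new vertices to $\sigma$ (either one vertex, when $e_s$ is a self-loop or its start already appears in $\sigma_{s-1}$, or two vertices otherwise), and each vertex of $V$ has at most $\InDegree$ incoming arcs. Hence $|\cE_s \setminus \cE_{s+1}| \leq 2\InDegree$, and a fortiori $|A \setminus A'| \leq 2\InDegree$. Combining with the previous display yields exactly the claimed inequality.

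I do not expect a significant obstacle: the proof is a careful four-line chaining of monotonicity, subadditivity, submodularity, and the greedy selection rule, multiplied by a counting bound on $|\cE_s \setminus \cE_{s+1}|$. The one place that deserves care is the order in which inclusions are used — in particular, verifying that $B \subseteq A' \cup B'$ so that submodularity applies in the direction we need, and verifying that every $e \in A \setminus A'$ actually lies in $\cE_s$ so that the greedy inequality is legitimate. Both follow from the straightforward containments $B \subseteq B'$ and $A \subseteq \cE_s$.
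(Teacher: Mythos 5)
Your proof is correct and follows essentially the same route as the paper's: bound $\lvert \cE_s \setminus \cE_{s+1}\rvert \le 2\InDegree$ by counting the at most two newly appended vertices, decompose the marginal over individual edges via submodularity, and invoke the greedy choice of $e_s$. The only (cosmetic) difference is that you apply monotonicity to enlarge $E(\sigma_{s-1})$ to $E(\sigma_s)$ in the first term and then condition back down to $E(\sigma_{s-1})$ via submodularity, whereas the paper shrinks $E(\sigma_s)$ to $E(\sigma_{s-1})$ in the subtracted term and works with marginals relative to $E(\sigma_{s-1})$ throughout; both orderings of the containments are valid.
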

\begin{proof}
Recall that, for every $1 \leq s \leq \ell'$, $\cE_s$ contains the arcs of $E$ whose end point is not in $\sigma_{s - 1}$. This definition implies that $\cE_{s + 1} \subseteq \cE_s$ because $\sigma_{s - 1}$ is a prefix of $\sigma_s$. In contrast, since $\sigma_s$ contains at most two vertices that do not appear in $\sigma_{s - 1}$ and each one of these vertices can be the end point of at most $\InDegree$ arcs, we also get $|\cE_s \setminus \cE_{s + 1}| \leq 2\InDegree$.

Using these observations and the submodularity of $h$, we can now derive the following inequality.
\begin{align*}
	h(E(\sigma^*) \cap \cE_s \mid E(\sigma_{s-1})) -{}&h(E(\sigma^*) \cap \cE_{s + 1} \mid E(\sigma_{s-1}))
	\leq
	\sum_{e \in E(\sigma^*) \cap (\cE_s \setminus \cE_{s + 1})} \mspace{-36mu} h(e \mid E(\sigma_{s-1}))\\
	\leq{} &
	\sum_{e \in E(\sigma^*) \cap (\cE_s \setminus \cE_{s + 1})} \mspace{-36mu} h(e_s \mid E(\sigma_{s-1}))
	=
	|E(\sigma^*) \cap (\cE_s \setminus \cE_{s + 1})| \cdot h(e_s \mid E(\sigma_{s-1}))\\
	\leq{} &
	|\cE_s \setminus \cE_{s + 1}| \cdot h(e_s \mid E(\sigma_{s-1}))
	\leq
	2\InDegree \cdot h(e_s \mid E(\sigma_{s-1}))
	\enspace,
\end{align*}
where the second inequality follows from the definition of $e_s$ which guarantees that it maximizes $h(e_s \mid E(\sigma_{s-1}))$ among all the edges of $\cE_s$.

It now remains to observe that
\begin{align*}
	h((E(\sigma^*) \cap \cE_s) \cup{}& E(\sigma_{s-1})) - h((E(\sigma^*) \cap \cE_{s + 1}) \cup E(\sigma_s))\\
	\leq{} &
	h((E(\sigma^*) \cap \cE_s) \cup E(\sigma_{s-1})) - h((E(\sigma^*) \cap \cE_{s + 1}) \cup E(\sigma_{s - 1}))\\
	={} &
	h(E(\sigma^*) \cap \cE_s \mid E(\sigma_{s-1})) - h(E(\sigma^*) \cap \cE_{s + 1} \mid E(\sigma_{s - 1}))
	\leq
	2\InDegree \cdot h(e_s \mid E(\sigma_{s-1}))
	\enspace,
\end{align*}
where the first inequality follows from the monotonicity of $h$ since the fact that $\sigma_{s - 1}$ is a prefix of $\sigma_s$ implies $E(\sigma_{s - 1}) \subseteq E(\sigma_s)$.
\end{proof}

We are now ready to combine all the above claims into a single lemma.

\begin{lemma} \label{lem:iteration_guarantee}
For every $1 \leq s \leq \ell$, the following two inequalities hold:
\begin{itemize}
	\item $h((E(\sigma^*) \cap \cE_s) \cup E(\sigma_{s - 1})) \geq f(\sigma^*) - 2\InDegree \cdot f(\sigma_{s - 1})$.
	\item $f(\sigma_s) - f(\sigma_{s - 1}) \geq \frac{f(\sigma^*) - 2\InDegree \cdot [f(\sigma_{s - 1}) - f(\sigma_0)] - f(\sigma_{s - 1})}{\InDegree k} \geq \frac{f(\sigma^*) - (2\InDegree + 1) \cdot f(\sigma_{s - 1})}{\InDegree k}$.
\end{itemize}
Moreover, the first inequality holds also for $s = \ell'$.
\end{lemma}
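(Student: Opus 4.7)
The plan is to prove the two inequalities in sequence. For part (1), I would argue by induction on $s$, and in fact establish the slightly stronger claim
\[
h((E(\sigma^*) \cap \cE_s) \cup E(\sigma_{s-1})) \geq f(\sigma^*) - 2\InDegree \cdot [f(\sigma_{s-1}) - f(\sigma_0)],
\]
which immediately implies the stated inequality because $f(\sigma_0) = h(\varnothing) \geq 0$ by non-negativity of $h$. The base case $s=1$ is immediate: $\cE_1 = E$ and $E(\sigma_0) = \varnothing$, so both sides equal $f(\sigma^*)$. For the inductive step, Lemma~\ref{lem:loss} gives
\[
h((E(\sigma^*) \cap \cE_{s+1}) \cup E(\sigma_s)) \geq h((E(\sigma^*) \cap \cE_s) \cup E(\sigma_{s-1})) - 2\InDegree \cdot h(e_s \mid E(\sigma_{s-1})),
\]
and Observation~\ref{obs:gain} lets me replace $h(e_s \mid E(\sigma_{s-1}))$ by the larger quantity $f(\sigma_s) - f(\sigma_{s-1})$. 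Combined with the inductive hypothesis, the $f(\sigma_{s-1})$ contributions telescope and the invariant survives at step $s+1$. Since Lemma~\ref{lem:loss} holds throughout the range $1 \leq s < \ell'$, the induction reaches $s = \ell'$, which takes care of the ``moreover'' clause.

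For part (2), I would run the standard greedy--submodular argument and then plug in part (1). By submodularity of $h$ together with the greedy choice of $e_s$ from $\cE_s$,
\[
h(E(\sigma^*) \cap \cE_s \mid E(\sigma_{s-1})) \leq \sum_{e \in E(\sigma^*) \cap \cE_s} h(e \mid E(\sigma_{s-1})) \leq |E(\sigma^*) \cap \cE_s| \cdot h(e_s \mid E(\sigma_{s-1})).
\]
Observation~\ref{obs:size} bounds the cardinality by $\InDegree k$, and Observation~\ref{obs:gain} bounds $h(e_s \mid E(\sigma_{s-1})) \leq f(\sigma_s) - f(\sigma_{s-1})$, which rearranges to
\[
f(\sigma_s) - f(\sigma_{s-1}) \geq \frac{h((E(\sigma^*) \cap \cE_s) \cup E(\sigma_{s-1})) - f(\sigma_{s-1})}{\InDegree k}.
\]
Substituting the \emph{strengthened} form of part (1) into the numerator yields the first expression in (2), and dropping the non-negative term $2\InDegree \cdot f(\sigma_0)$ produces the weaker second expression.

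The main obstacle is really bookkeeping in part (1): one has to spot that the right invariant to induct on is the stronger version involving $f(\sigma_0)$, since a direct induction on the form as stated would not reproduce the first fraction in (2). It is also worth double-checking that the induction extends all the way to $s = \ell'$, which is precisely what the range $1 \leq s < \ell'$ in Lemma~\ref{lem:loss} allows. Beyond these two points, the proof is routine: the greedy selection of $e_s$, the monotonicity and submodularity of $h$, and the edge-count bound $|E(\sigma^*) \cap \cE_s| \leq \InDegree k$ supply everything else.
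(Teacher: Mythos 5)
Your proposal is correct and follows essentially the same route as the paper: the paper obtains your strengthened invariant $h((E(\sigma^*) \cap \cE_s) \cup E(\sigma_{s-1})) \geq f(\sigma^*) - 2\InDegree\cdot[f(\sigma_{s-1}) - f(\sigma_0)]$ by summing Lemma~\ref{lem:loss} over $t = 1,\dots,s-1$ and applying Observation~\ref{obs:gain}, which is just your induction unrolled, and its derivation of the second bullet via submodularity, the greedy choice of $e_s$, and Observation~\ref{obs:size} matches yours step for step. Your remark that the as-stated form of the first bullet is too weak to yield the first fraction in the second bullet is exactly the point the paper's proof also exploits.
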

\begin{proof}
Lemma~\ref{lem:loss} shows that, for every $1 \leq t < \ell'$, we have
\[
	h((E(\sigma^*) \cap \cE_{t + 1}) \cup E(\sigma_t)) \geq h((E(\sigma^*) \cap \cE_t) \cup E(\sigma_{t-1})) - 2\InDegree \cdot h(e_t \mid \sigma_{t - 1})
	\enspace.
\]
Adding up this inequality for $1 \leq t < s$ gives us
\begin{align*}
	h((E(\sigma^*) \cap \cE_s) \cup E(\sigma_{s - 1}))
	\geq{} &
	h((E(\sigma^*) \cap \cE_1) \cup E(\sigma_0)) - 2\InDegree \cdot \sum_{t = 1}^{s - 1} h(e_t \mid \sigma_{t - 1})\\
	={}&
	f(\sigma^*) - 2\InDegree \cdot \sum_{t = 1}^{s - 1} h(e_t \mid \sigma_{t - 1})
	\geq
	f(\sigma^*) - 2\InDegree \cdot \sum_{t = 1}^{s - 1} [f(\sigma_t) - f(\sigma_{t - 1})]\\
	={} &
	f(\sigma^*) - 2\InDegree \cdot [f(\sigma_{s - 1}) - f(\sigma_0)]
	\geq
	f(\sigma^*) - 2\InDegree \cdot f(\sigma_{s - 1})
	\enspace.
\end{align*}
The first equality follows since the fact that $\sigma_0$ is empty implies $E(\sigma_0) = \varnothing$ and $E(\sigma^*) \cap \cE_1 = E(\sigma^*)$. 
Additionally, the second inequality follows from Observation~\ref{obs:gain}, and the last inequality follows from the non-negativity of $f$. This proves that the first inequality of the lemma holds for every $1 \leq s \leq \ell'$. In the rest of the proof we aim to prove the second inequality, and thus, assume $1 \leq s \leq \ell$.

Recall now that $e_s$ is the edge of $\cE_s$ maximizing $h(e_s \mid E(\sigma_{s - 1}))$ and that the size of $E(\sigma^*) \cap \cE_s \subseteq E(\sigma^*)$ is at most $\InDegree k$ by Observation~\ref{obs:size}. Thus, by the submodularity of $h$,
\begin{align*}
	h(e_s \mid E(\sigma_{s - 1}))
	\geq{} &
	\frac{\sum_{e \in E(\sigma^*) \cap \cE_s} h(e \mid E(\sigma_{s - 1}))}{|E(\sigma^*) \cap \cE_s|}
	\geq
	\frac{h(E(\sigma^*) \cap \cE_s \mid E(\sigma_{s - 1}))}{\InDegree k}\\
	={} &
	\frac{h((E(\sigma^*) \cap \cE_s) \cup E(\sigma_{s - 1})) - h(E(\sigma_{s - 1}))}{\InDegree k}\\
	\geq{} &
	\frac{\{f(\sigma^*) - 2\InDegree \cdot [f(\sigma_{s - 1}) - f(\sigma_0)]\} - f(\sigma_{s - 1})}{\InDegree k}
	\geq
	\frac{f(\sigma^*) - (2\InDegree + 1) \cdot f(\sigma_{s - 1})}{\InDegree k}
	\enspace.
\end{align*}
The lemma follows from this inequality since $h(e_s \mid E(\sigma_{s - 1}))$ lower bounds $f(\sigma_s) - f(\sigma_{s - 1})$ by Observation~\ref{obs:gain}.
\end{proof}

\begin{corollary} \label{cor:final_raw}
If $2\InDegree + 1 < \InDegree k$, then $f(\sigma_{\ell}) \geq \frac{f(\sigma^*)}{2\InDegree + 1} + \frac{[1 - (2\InDegree + 1) / (\InDegree k)]^\ell}{2\InDegree + 1} \cdot [(2\InDegree + 1)f(\sigma_0) - f(\sigma^*)]$.
\end{corollary}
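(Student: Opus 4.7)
The plan is to unroll the per-iteration inequality from Lemma~\ref{lem:iteration_guarantee} into a closed-form lower bound on $f(\sigma_\ell)$. Starting from the second bullet of that lemma, namely
\[
	f(\sigma_s) - f(\sigma_{s-1}) \geq \frac{f(\sigma^*) - (2\InDegree + 1) f(\sigma_{s-1})}{\InDegree k}
	\enspace,
\]
I would rewrite this as a one-step linear recursion of the form $f(\sigma_s) \geq (1 - \alpha) f(\sigma_{s-1}) + \beta$, where $\alpha = (2\InDegree + 1)/(\InDegree k)$ and $\beta = f(\sigma^*)/(\InDegree k)$. The hypothesis $2\InDegree + 1 < \InDegree k$ is precisely what guarantees $1 - \alpha \in (0, 1)$, so that powers of $(1-\alpha)$ shrink and the recursion behaves well.

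Next I would apply the recursion repeatedly for $s = 1, 2, \ldots, \ell$, obtaining
\[
	f(\sigma_\ell)
	\geq
	(1-\alpha)^\ell f(\sigma_0) + \beta \sum_{t=0}^{\ell - 1} (1-\alpha)^t
	=
	(1-\alpha)^\ell f(\sigma_0) + \frac{\beta}{\alpha}\bigl[1 - (1-\alpha)^\ell\bigr]
	\enspace.
\]
Substituting $\beta/\alpha = f(\sigma^*)/(2\InDegree + 1)$ and rearranging gives
\[
	f(\sigma_\ell)
	\geq
	\frac{f(\sigma^*)}{2\InDegree + 1}
	+
	(1-\alpha)^\ell\left[f(\sigma_0) - \frac{f(\sigma^*)}{2\InDegree + 1}\right]
	\enspace,
\]
and factoring $1/(2\InDegree + 1)$ out of the bracket yields the corollary's expression exactly.

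There is no real obstacle here beyond careful bookkeeping: the only subtlety is justifying the recursion's monotonic behavior, which is why the hypothesis $2\InDegree + 1 < \InDegree k$ is stated — it ensures $(1-\alpha) > 0$ so that the induction step preserves the lower bound direction (rather than flipping signs when multiplying). The final algebraic regrouping into the form $\frac{f(\sigma^*)}{2\InDegree+1} + \frac{(1-\alpha)^\ell}{2\InDegree+1}\bigl[(2\InDegree+1)f(\sigma_0) - f(\sigma^*)\bigr]$ is then immediate.
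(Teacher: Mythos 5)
Your proof is correct and takes essentially the same route as the paper: both rest on the one-step linear recursion $f(\sigma_s) \geq (1-\alpha)f(\sigma_{s-1}) + \beta$ with $\alpha = (2\InDegree+1)/(\InDegree k)$ and $\beta = f(\sigma^*)/(\InDegree k)$ extracted from the second bullet of Lemma~\ref{lem:iteration_guarantee}, the only difference being that the paper verifies the closed form by induction on $s$ while you unroll the recursion and sum the geometric series, which is the same computation in a different presentation. Your remark that the hypothesis $2\InDegree + 1 < \InDegree k$ guarantees $1 - \alpha > 0$, so that multiplying the lower bound by $1-\alpha$ preserves its direction, is exactly the justification needed.
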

\begin{proof}
To prove the corollary, we prove by induction the stronger claim that, for every $0 \leq s \leq \ell$,
\[
	f(\sigma_s)
	\geq
	\frac{f(\sigma^*)}{2\InDegree + 1} + \frac{[1 - (2\InDegree + 1) / (\InDegree k)]^s}{2\InDegree + 1} \cdot [(2\InDegree + 1)f(\sigma_0) - f(\sigma^*)]
	\enspace.
\]
For $s = 0$ this inequality is true since
\begin{align*}
	f(\sigma_0)
	={} &
	\frac{f(\sigma^*)}{2\InDegree + 1} + \frac{1}{2\InDegree + 1} \cdot [(2\InDegree + 1)f(\sigma_0) - f(\sigma^*)]\\
	={} &
	\frac{f(\sigma^*)}{2\InDegree + 1} + \frac{[1 - (2\InDegree + 1) / (\InDegree k)]^0}{2\InDegree + 1} \cdot [(2\InDegree + 1)f(\sigma_0) - f(\sigma^*)]
	\enspace.
\end{align*}
Assume now that the claim holds for $s - 1 \geq 0$, and let us prove it for $s$. By Lemma~\ref{lem:iteration_guarantee},
\[
	f(\sigma_s)
	\geq
	f(\sigma_{s - 1}) + \frac{f(\sigma^*) - (2\InDegree + 1) \cdot f(\sigma_{s - 1})}{\InDegree k}
	=
	\left(1 - \frac{2\InDegree + 1}{\InDegree k}\right) \cdot f(\sigma_{s - 1}) + \frac{f(\sigma^*)}{\InDegree k}
	\enspace.
\]
Plugging in the induction hypothesis, we get
\begin{align*}
	f(\sigma_s)
	\geq{} &
	\left(1 - \frac{2\InDegree + 1}{\InDegree k}\right) \cdot \left\{\frac{f(\sigma^*)}{2\InDegree + 1} + \frac{[1 - (2\InDegree + 1) / (\InDegree k)]^{s - 1}}{2\InDegree + 1} \cdot [(2\InDegree + 1)f(\sigma_0) - f(\sigma^*)]\right\} \\&+ \frac{f(\sigma^*)}{\InDegree k}
	=
	\frac{f(\sigma^*)}{2\InDegree + 1} + \frac{[1 - (2\InDegree + 1) / (\InDegree k)]^s}{2\InDegree + 1} \cdot [(2\InDegree + 1)f(\sigma_0) - f(\sigma^*)]
	\enspace.
	\qedhere
\end{align*}
\end{proof}

We are now ready to prove Theorem~\ref{thm:graphs}.

\begin{proof}[Proof of Theorem~\ref{thm:graphs}]
First, we need to consider the case that Algorithm~\ref{alg:greedy_path} terminates because the set $\cE$ becomes empty. In this case $\cE_{\ell'} = \varnothing$, which implies
\[
	h((E(\sigma^*) \cap \cE_{\ell'}) \cup E(\sigma_{\ell' - 1}))
	=
	h(E(\sigma_\ell))
	=
	f(\sigma_\ell).
\]
Using Lemma~\ref{lem:iteration_guarantee} for $s = \ell'$, this observation implies
\[
	f(\sigma_\ell)
	\geq
	f(\sigma^*) - 2\InDegree \cdot f(\sigma_{\ell})
	\Rightarrow
	f(\sigma_{\ell})
	\geq
	\frac{f(\sigma^*)}{2\InDegree + 1}
	\enspace,
\]
which proves the theorem. Thus, in the rest of the proof we may assume that Algorithm~\ref{alg:greedy_path} terminates because $\sigma$ reaches a size larger than $k - 2$.

Consider now the case that $2\InDegree + 1 \geq \InDegree k$. In this case
\begin{align*}
	f(\sigma_{\ell})
	\geq{} &
	f(\sigma_1)
	=
	f(\sigma_0) + [f(\sigma_1) - f(\sigma_0)]
	\geq
	f(\sigma_0) + \frac{f(\sigma^*) - f(\sigma_0)}{\InDegree k}\\
	\geq{} &
	\frac{f(\sigma^*)}{\InDegree k}
	\geq
	\frac{f(\sigma^*)}{2\InDegree + 1}
	\geq
	\frac{1 - e^{-(1 - \frac{1}{k})}}{2\InDegree + 1} \cdot f(\sigma^*)
	\enspace,
\end{align*}
where the first inequality holds since $\sigma_1$ is a prefix of $\sigma_\ell$ and the second inequality follows from Lemma~\ref{lem:iteration_guarantee}. Thus, it remains to prove the theorem in the more interesting case of $2\InDegree + 1 < \InDegree k$.


%
Observe that
\[
	[1 - (2\InDegree + 1) / (\InDegree k)]^\ell
	\leq
	e^{-(2 + 1 / \InDegree) \cdot (\ell / k)}
	\enspace.
\]
Plugging this inequality into Corollary~\ref{cor:final_raw} gives
\begin{align*}
	f(\sigma_{\ell})
	\geq{} &
	\frac{f(\sigma^*)}{2\InDegree + 1} + \frac{[1 - (2\InDegree + 1) / (\InDegree k)]^\ell}{2\InDegree + 1} \cdot [(2\InDegree + 1)f(\sigma_0) - f(\sigma^*)]\\
	\geq{} &
	\frac{1 - [1 - (2\InDegree + 1) / (\InDegree k)]^\ell}{2\InDegree + 1} \cdot f(\sigma^*)
	\geq
	\frac{1 - e^{-(2 + 1 / \InDegree) \cdot (\ell / 	k)}}{2\InDegree + 1} \cdot f(\sigma^*)
	\enspace.
\end{align*}
At this point we need a lower bound on $\ell$. One can note that $|\sigma|$ starts as $0$, increases by at most $2$ in each iteration of the loop of Algorithm~\ref{alg:greedy_path} and ends up with a value of at least $k - 1$ by our assumption. Thus, the number $\ell$ of iterations must be at least $(k - 1)/2$. Plugging this observation into the previous inequality gives
\[
	f(\sigma_{\ell})
	\geq
	\frac{1 - e^{-(2 + 1 / \InDegree) \cdot (1 - 1 / k)/2}}{2\InDegree + 1} \cdot f(\sigma^*)
	\geq
	\frac{1 - e^{-(1 - 1 / k)}}{2\InDegree + 1} \cdot f(\sigma^*)
	\enspace.
	\qedhere
\]
\end{proof}


\subsection{Proof of Theorem~\ref{thm:hyper}}

In this section we prove Theorem~\ref{thm:hyper}, however, before we get into the proof, let us first recall the theorem itself.

\begin{reptheorem}{thm:hyper}
The approximation ratio of Algorithm~\ref{alg:greedy_path_hyper} is at least $\frac{1 - e^{-(1 - \frac{r}{k})}}{r\InDegree + 1}$.
\end{reptheorem}

In the proof of this theorem we use the same notation that we used in Section~\ref{app:graphs} for analyzing Algorithm~\ref{alg:greedy_path}. One can observe that the proofs of Observation~\ref{obs:gain} and Observation~\ref{obs:size} are unaffected by the differences between Algorithm~\ref{alg:greedy_path} and Algorithm~\ref{alg:greedy_path_hyper}, and thus, these two observations can also be used for towards the proof of Theorem~\ref{thm:hyper}.

The next lemma is analogous to Lemma~\ref{lem:loss}.

\begin{lemma} \label{lem:loss_hyper}
For every $1 \leq s < \ell'$, $h((E(\sigma^*) \cap \cE_{s + 1}) \cup E(\sigma_s)) \geq h((E(\sigma^*) \cap \cE_s) \cup E(\sigma_{s-1})) - r\InDegree \cdot h(e_s \mid \sigma_{s - 1})$.
\end{lemma}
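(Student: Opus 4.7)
My plan is to mirror the proof of Lemma~\ref{lem:loss} step for step, updating only the two places where the combinatorial structure of ordinary directed edges is used. The skeleton stays the same: (i) establish $\cE_{s+1} \subseteq \cE_s$; (ii) bound $|\cE_s \setminus \cE_{s+1}|$, now by $r\InDegree$ rather than $2\InDegree$; (iii) use submodularity of $h$ together with the greedy choice of $e_s$ to turn (ii) into $h(E(\sigma^*) \cap \cE_s \mid E(\sigma_{s-1})) - h(E(\sigma^*) \cap \cE_{s+1} \mid E(\sigma_{s-1})) \leq r\InDegree \cdot h(e_s \mid E(\sigma_{s-1}))$; and (iv) use monotonicity together with $E(\sigma_{s-1}) \subseteq E(\sigma_s)$ to convert that marginal-contribution statement into the form requested by the lemma.

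Step (i) is immediate from the prefix-based definition of $\cE_s$: since $\sigma_{s-1}$ is a prefix of $\sigma_s$, the subsequence $\sigma_{s-1} \cap V(e)$ is a prefix of $\sigma_s \cap V(e)$, so any hyperedge $e$ whose intersection with $\sigma_s$ is a prefix of $e$ already had this property with respect to $\sigma_{s-1}$.

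The main obstacle is step (ii). Let $w_1, w_2, \ldots, w_q$ denote the vertices appended to $\sigma$ during iteration $s$, in order; these are the vertices of $e_s$ not already in $\sigma_{s-1}$, so $q \leq |e_s| \leq r$. For any $e \in \cE_s \setminus \cE_{s+1}$, write $e_{[1..p]} = \sigma_{s-1} \cap V(e)$. The intersection $\sigma_s \cap V(e)$ extends $e_{[1..p]}$ by the subsequence of $w_1, \ldots, w_q$ lying in $V(e)$; since $e$ has left $\cE$, this extension is not a prefix of $e$, so there is a first ``breaker'' $w_{j(e)} \in V(e)$ differing from the next expected vertex of $e$. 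I would then argue that the breaker is never the first vertex of $e$: if $p \geq 1$ then $e_1 \in \sigma_{s-1}$, which is disjoint from the newly appended $w_j$'s; and if $p = 0$ then any $w_j = e_1$ would continue the prefix rather than break it, and because the algorithm never repeats vertices a later breaker cannot equal $e_1$ either. Hence each dropped $e$ is an edge to whose in-degree at $w_{j(e)}$ it contributes, so no more than $\InDegree$ edges can share any given breaker, yielding $|\cE_s \setminus \cE_{s+1}| \leq q \InDegree \leq r\InDegree$.

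With the $r\InDegree$ bound in place, steps (iii) and (iv) repeat the corresponding computations in Lemma~\ref{lem:loss} verbatim: submodularity rewrites the marginal drop as a sum over $E(\sigma^*) \cap (\cE_s \setminus \cE_{s+1})$, the optimality of $e_s$ within $\cE_s$ replaces every summand by $h(e_s \mid E(\sigma_{s-1}))$, the size bound finishes the inequality for marginal contributions, and a final application of monotonicity (together with $E(\sigma_{s-1}) \subseteq E(\sigma_s)$) converts it into the form claimed by the lemma.
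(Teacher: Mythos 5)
Your proposal is correct and follows the same route as the paper's proof: establish $\cE_{s+1}\subseteq\cE_s$, bound $|\cE_s\setminus\cE_{s+1}|\leq r\InDegree$ by charging each dropped hyperedge to a newly added non-first vertex of that hyperedge, and then reuse the submodularity/greedy-choice/monotonicity chain from Lemma~\ref{lem:loss} verbatim. Your ``breaker'' analysis is simply a more detailed justification of the paper's one-sentence claim that a new vertex $u$ can exclude a hyperedge only when $u$ is a non-first vertex of it.
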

\begin{proof}
Observe that the definition of $\cE_s$ guarantees that $\cE_{s + 1} \subseteq \cE_s$, for every $1 \leq s < \ell'$, because $\sigma_{s - 1}$ is a prefix of $\sigma_s$.
In contrast, every vertex $u$ that appears in $\sigma_s$ but not in $\sigma_{s - 1}$ can be responsible for at most $\InDegree$ arcs of $\cE_s \setminus \cE_{s + 1}$ because $u$ can be responsible for excluding an arc from $\cE_{s + 1}$ only if $u$ is a non-first vertex of the arc. Since $\sigma_s$ contains at most $r$ vertices that do not appear in $\sigma_{s - 1}$, this implies $|\cE_s \setminus \cE_{s + 1}| \leq r\InDegree$.

Using these observations and the submodularity of $h$, we can now derive the following inequality.
\begin{align*}
	h(E(\sigma^*) \cap \cE_s \mid E(\sigma_{s-1})) -{}&h(E(\sigma^*) \cap \cE_{s + 1} \mid E(\sigma_{s-1}))
	\leq
	\sum_{e \in E(\sigma^*) \cap (\cE_s \setminus \cE_{i + 1})} \mspace{-36mu} h(e \mid E(\sigma_{s-1}))\\
	\leq{} &
	\sum_{e \in E(\sigma^*) \cap (\cE_s \setminus \cE_{i + 1})} \mspace{-36mu} h(e_s \mid E(\sigma_{s-1}))
	=
	|E(\sigma^*) \cap (\cE_s \setminus \cE_{s + 1})| \cdot h(e_s \mid E(\sigma_{s-1}))\\
	\leq{} &
	|\cE_s \setminus \cE_{s + 1}| \cdot h(e_s \mid E(\sigma_{s-1}))
	\leq
	r\InDegree \cdot h(e_s \mid E(\sigma_{s-1}))
	\enspace,
\end{align*}
where the second inequality follows from the definition of $e_s$ which guarantees that it maximizes $h(e_s \mid E(\sigma_{s-1}))$ among all the edges of $\cE_s$.

It now remains to observe that
\begin{align*}
	h((E(\sigma^*) \cap \cE_s) \cup{}& E(\sigma_{s-1})) - h((E(\sigma^*) \cap \cE_{s + 1}) \cup E(\sigma_s))\\
	\leq{} &
	h((E(\sigma^*) \cap \cE_s) \cup E(\sigma_{s-1})) - h((E(\sigma^*) \cap \cE_{s + 1}) \cup E(\sigma_{s - 1}))\\
	={} &
	h(E(\sigma^*) \cap \cE_s \mid E(\sigma_{s-1})) - h(E(\sigma^*) \cap \cE_{s + 1} \mid E(\sigma_{s - 1}))
	\leq
	r\InDegree \cdot h(e_s \mid E(\sigma_{s-1}))
	\enspace,
\end{align*}
where the first inequality follows from the monotonicity of $h$ since the fact that $\sigma_{s - 1}$ is a prefix of $\sigma_s$ implies $E(\sigma_{s - 1}) \subseteq E(\sigma_s)$.
\end{proof}

We are now ready to prove the following analog of Lemma~\ref{lem:iteration_guarantee}.

\begin{lemma} \label{lem:iteration_guarantee_hyper}
For every $1 \leq s \leq \ell$, the following two inequalities hold:
\begin{itemize}
	\item $h((E(\sigma^*) \cap \cE_s) \cup E(\sigma_{s - 1})) \geq f(\sigma^*) - r\InDegree \cdot f(\sigma_{s - 1})$.
	\item $f(\sigma_s) - f(\sigma_{s - 1}) \geq \frac{f(\sigma^*) - r\InDegree \cdot [f(\sigma_{s - 1}) - f(\sigma_0)] - f(\sigma_{s-1})}{\InDegree k} \geq \frac{f(\sigma^*) - (r\InDegree + 1) \cdot f(\sigma_{s - 1})}{\InDegree k}$.
\end{itemize}
Moreover, the first inequality holds also for $s = \ell'$.
\end{lemma}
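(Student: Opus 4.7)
The plan is to mimic exactly the proof of Lemma~\ref{lem:iteration_guarantee}, substituting Lemma~\ref{lem:loss_hyper} for Lemma~\ref{lem:loss} (which changes the constant $2\InDegree$ into $r\InDegree$), and using the fact that Observations~\ref{obs:gain} and~\ref{obs:size} go through unchanged in the hypergraph setting, as noted just before Lemma~\ref{lem:loss_hyper}.

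For the first inequality, I would apply Lemma~\ref{lem:loss_hyper} at indices $t = 1, 2, \dots, s-1$ and telescope. This yields
\[
    h((E(\sigma^*) \cap \cE_s) \cup E(\sigma_{s-1}))
    \geq
    h((E(\sigma^*) \cap \cE_1) \cup E(\sigma_0)) - r\InDegree \cdot \sum_{t=1}^{s-1} h(e_t \mid E(\sigma_{t-1})).
\]
Since $\sigma_0$ is empty, $E(\sigma_0) = \varnothing$ and $E(\sigma^*) \cap \cE_1 = E(\sigma^*)$, so the first term on the right is $f(\sigma^*)$. Observation~\ref{obs:gain} lets me bound each $h(e_t \mid E(\sigma_{t-1}))$ by $f(\sigma_t) - f(\sigma_{t-1})$; the resulting sum telescopes to $f(\sigma_{s-1}) - f(\sigma_0)$. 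Dropping $f(\sigma_0) \geq 0$ then gives $f(\sigma^*) - r\InDegree \cdot f(\sigma_{s-1})$, which proves the first inequality for every $1 \leq s \leq \ell'$ (the argument never uses $s \leq \ell$).

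For the second inequality (assuming $1 \leq s \leq \ell$), I would start from the greedy choice of $e_s$: since $e_s$ maximizes $h(\cdot \mid E(\sigma_{s-1}))$ over $\cE_s$, averaging over $E(\sigma^*) \cap \cE_s$ and invoking submodularity gives
\[
    h(e_s \mid E(\sigma_{s-1}))
    \geq
    \frac{h(E(\sigma^*) \cap \cE_s \mid E(\sigma_{s-1}))}{|E(\sigma^*) \cap \cE_s|}.
\]
Observation~\ref{obs:size} upper-bounds the denominator by $\InDegree k$. Rewriting the numerator as $h((E(\sigma^*) \cap \cE_s) \cup E(\sigma_{s-1})) - h(E(\sigma_{s-1}))$ and plugging in the first inequality of the lemma gives a lower bound of $\{f(\sigma^*) - r\InDegree \cdot f(\sigma_{s-1})\} - f(\sigma_{s-1})$ in the numerator, yielding the second stated bound. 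Finally, Observation~\ref{obs:gain} turns $h(e_s \mid E(\sigma_{s-1}))$ into a lower bound on $f(\sigma_s) - f(\sigma_{s-1})$, closing the argument.

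There is essentially no obstacle here beyond bookkeeping: Lemma~\ref{lem:loss_hyper} already absorbs the only structural difference between graphs and hypergraphs (the $r\InDegree$ bound on $|\cE_s \setminus \cE_{s+1}|$), and Observations~\ref{obs:gain} and~\ref{obs:size} transfer verbatim because they use only the prefix property of the update rule and the cardinality bound $|\sigma^*| \leq k$. The only subtle point to double-check is that Observation~\ref{obs:size} really is unchanged, i.e. that $|E(\sigma^*)| \leq \InDegree k$ still holds in the hypergraph setting; this follows because every hyperedge $e \in E(\sigma^*)$ has a unique non-first vertex that contributes to $\InDegree$ of that vertex (self-loops count once in $\InDegree$ by the definition given), so the total $|E(\sigma^*)|$ is at most $\InDegree \cdot |\sigma^*| \leq \InDegree k$. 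Once this is noted, the proof is a word-for-word adaptation of the graph case.
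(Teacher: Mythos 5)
Your proposal is correct and follows essentially the same route as the paper: telescoping Lemma~\ref{lem:loss_hyper} to get the first inequality, then combining the greedy choice of $e_s$ with submodularity, Observation~\ref{obs:size}, and the first inequality to get the second. Your explicit check that Observation~\ref{obs:size} survives the passage to hypergraphs (via the non-first-vertex argument) is a welcome addition that the paper only asserts.
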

\begin{proof}
Lemma~\ref{lem:loss_hyper} shows that, for every $1 \leq t < \ell'$, we have
\[
	h((E(\sigma^*) \cap \cE_{t + 1}) \cup E(\sigma_t)) \geq h((E(\sigma^*) \cap \cE_t) \cup E(\sigma_{t-1})) - r\InDegree \cdot h(e_t \mid \sigma_{t - 1})
	\enspace.
\]
Adding up this inequality for $1 \leq t < s$ gives us
\begin{align*}
	h((E(\sigma^*) \cap \cE_s) \cup E(\sigma_{s - 1}))
	\geq{} &
	h((E(\sigma^*) \cap \cE_1) \cup E(\sigma_0)) - r\InDegree \cdot \sum_{t = 1}^{s - 1} h(e_s \mid \sigma_{s - 1})\\
	={}&
	f(\sigma^*) - r\InDegree \cdot \sum_{t = 1}^{s - 1} h(e_s \mid \sigma_{s - 1})
	\geq
	f(\sigma^*) - r\InDegree \cdot \sum_{t = 1}^{s - 1} [f(\sigma_t) - f(\sigma_{t - 1})]\\
	={} &
	f(\sigma^*) - r\InDegree \cdot [f(\sigma_{s - 1}) - f(\sigma_0)]
	\geq
	f(\sigma^*) - r\InDegree \cdot f(\sigma_{s - 1})
	\enspace.
\end{align*}
The first equality follows since the fact that $\sigma_0$ is empty implies $E(\sigma_0) = \varnothing$ and $E(\sigma^*) \cap \cE_1 = E(\sigma^*)$. 
Additionally, the second inequality follows from Observation~\ref{obs:gain}, and the last inequality follows from the non-negativity of $f$. This proves that the first inequality of the lemma holds for every $1 \leq s \leq \ell'$. In the rest of the proof we aim to prove the second inequality, and thus, assume $1 \leq s \leq \ell$.

Recall now that $e_s$ is the edge of $\cE_s$ maximizing $h(e_s \mid E(\sigma_{s - 1}))$ and that the size of $E(\sigma^*) \cap \cE_s \subseteq E(\sigma^*)$ is at most $\InDegree k$ by Observation~\ref{obs:size}. Thus, by the submodularity of $h$,
\begin{align*}
	h(e_s \mid E(\sigma_{s - 1}))
	\geq{} &
	\frac{\sum_{e \in E(\sigma^*) \cap \cE_i} h(e \mid E(\sigma_{s - 1}))}{|E(\sigma^*) \cap \cE_s|}
	\geq
	\frac{h(E(\sigma^*) \cap \cE_s \mid E(\sigma_{s - 1}))}{\InDegree k}\\
	\geq{} &
	\frac{h((E(\sigma^*) \cap \cE_s) \cup E(\sigma_{s - 1})) - h(E(\sigma_{s - 1}))}{\InDegree k}\\
	\geq{} &
	\frac{\{f(\sigma^*) - r\InDegree \cdot [f(\sigma_{s - 1}) - f(\sigma_0)]\} - f(\sigma_{s - 1})}{\InDegree k}
	\geq
	\frac{f(\sigma^*) - (r\InDegree + 1) \cdot f(\sigma_{s - 1})}{\InDegree k}
	\enspace.
\end{align*}
The second inequality of the lemma now follows by combining the last inequality with Observation~\ref{obs:gain}.
\end{proof}

\begin{corollary} \label{cor:final_raw_hyper}
If $r\InDegree + 1 < \InDegree k$, then $f(\sigma_{\ell}) \geq \frac{f(\sigma^*)}{r\InDegree + 1} + \frac{[1 - (r\InDegree + 1) / (\InDegree k)]^\ell}{r\InDegree + 1} \cdot [(r\InDegree + 1)f(\sigma_0) - f(\sigma^*)]$.
\end{corollary}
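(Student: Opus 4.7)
The plan is to mirror the proof of Corollary~\ref{cor:final_raw} essentially verbatim, with $2\InDegree$ replaced by $r\InDegree$ throughout and with Lemma~\ref{lem:iteration_guarantee_hyper} playing the role of Lemma~\ref{lem:iteration_guarantee}. Concretely, I would prove by induction on $s$, for every $0 \leq s \leq \ell$, the strengthened claim
\[
	f(\sigma_s) \geq \frac{f(\sigma^*)}{r\InDegree + 1} + \frac{[1 - (r\InDegree + 1) / (\InDegree k)]^s}{r\InDegree + 1} \cdot [(r\InDegree + 1)f(\sigma_0) - f(\sigma^*)],
\]
from which the corollary is the $s = \ell$ case. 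The assumption $r\InDegree + 1 < \InDegree k$ guarantees that the base $1 - (r\InDegree+1)/(\InDegree k)$ lies strictly in $(0,1)$, so the exponential factor is well-behaved and the recurrence does not degenerate.

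For the base case $s = 0$, the factor $[1 - (r\InDegree+1)/(\InDegree k)]^0 = 1$, and a direct algebraic simplification on the right-hand side collapses it to $f(\sigma_0)$, matching the trivial identity. For the inductive step, I would invoke the second inequality of Lemma~\ref{lem:iteration_guarantee_hyper} in the rearranged form
\[
	f(\sigma_s) \geq \left(1 - \frac{r\InDegree + 1}{\InDegree k}\right) f(\sigma_{s-1}) + \frac{f(\sigma^*)}{\InDegree k},
\]
substitute the inductive hypothesis for $f(\sigma_{s-1})$, and then verify that the constant term $f(\sigma^*)/(\InDegree k)$ combines with the $f(\sigma^*)/(r\InDegree+1)$ piece so as to reproduce $f(\sigma^*)/(r\InDegree+1)$, while the $[1 - (r\InDegree+1)/(\InDegree k)]^{s-1}$ piece gets multiplied by the leading factor and becomes $[1 - (r\InDegree+1)/(\InDegree k)]^{s}$. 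This is exactly the same bookkeeping performed at the end of the proof of Corollary~\ref{cor:final_raw}.

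There is no real obstacle here; the only point that requires mild care is that Lemma~\ref{lem:iteration_guarantee_hyper}'s second inequality is stated only for $1 \leq s \leq \ell$ (not for $s = \ell'$), so the induction is carried out up to $s = \ell$, which is exactly the range in which the corollary is stated. Everything else is mechanical algebra inherited from the general-graph case, which is why the write-up can be made extremely short by referencing the earlier calculation and noting the single substitution $2\InDegree \mapsto r\InDegree$.
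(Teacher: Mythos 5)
Your proposal is correct and follows essentially the same route as the paper: the paper also proves the strengthened claim for all $0 \leq s \leq \ell$ by induction, verifies the base case $s=0$ as a trivial identity, and uses the rearranged second inequality of Lemma~\ref{lem:iteration_guarantee_hyper} for the inductive step, exactly mirroring Corollary~\ref{cor:final_raw} with $2\InDegree$ replaced by $r\InDegree$.
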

\begin{proof}
To prove the corollary, we prove by induction the stronger claim that, for every $0 \leq s \leq \ell$,
\[
	f(\sigma_s)
	\geq
	\frac{f(\sigma^*)}{r\InDegree + 1} + \frac{[1 - (r\InDegree + 1) / (\InDegree k)]^s}{2r\InDegree + 1} \cdot [(r\InDegree + 1)f(\sigma_0) - f(\sigma^*)]
	\enspace.
\]
For $s = 0$ this inequality is true since
\begin{align*}
	f(\sigma_0)
	={} &
	\frac{f(\sigma^*)}{r\InDegree + 1} + \frac{1}{r\InDegree + 1} \cdot [(r\InDegree + 1)f(\sigma_0) - f(\sigma^*)]\\
	={} &
	\frac{f(\sigma^*)}{r\InDegree + 1} + \frac{[1 - (r\InDegree + 1) / (\InDegree k)]^0}{r\InDegree + 1} \cdot [(r\InDegree + 1)f(\sigma_0) - f(\sigma^*)]
	\enspace.
\end{align*}
Assume now that the claim holds for $s - 1 \geq 0$, and let us prove it for $s$. By Lemma~\ref{lem:iteration_guarantee_hyper},
\[
	f(\sigma_s)
	\geq
	f(\sigma_{s - 1}) + \frac{f(\sigma^*) - (r\InDegree + 1) \cdot f(\sigma_{s - 1})}{\InDegree k}
	=
	\left(1 - \frac{r\InDegree + 1}{\InDegree k}\right) \cdot f(\sigma_{s - 1}) + \frac{f(\sigma^*)}{\InDegree k}
	\enspace.
\]
Plugging in the induction hypothesis, we get
\begin{align*}
	f(\sigma_i)
	\geq{} &
	\left(1 - \frac{r\InDegree + 1}{\InDegree k}\right) \cdot \left\{\frac{f(\sigma^*)}{r\InDegree + 1} + \frac{[1 - (r\InDegree + 1) / (\InDegree k)]^{s - 1}}{r\InDegree + 1} \cdot [(r\InDegree + 1)f(\sigma_0) - f(\sigma^*)]\right\} \\&+ \frac{f(\sigma^*)}{\InDegree k}
	=
	\frac{f(\sigma^*)}{r\InDegree + 1} + \frac{[1 - (r\InDegree + 1) / (\InDegree k)]^s}{r\InDegree + 1} \cdot [(r\InDegree + 1)f(\sigma_0) - f(\sigma^*)]
	\enspace.
	\qedhere
\end{align*}
\end{proof}

We are now ready to prove Theorem~\ref{thm:hyper}.
\begin{proof}[Proof of Theorem~\ref{thm:hyper}]
First, we need to consider the case that Algorithm~\ref{alg:greedy_path_hyper} terminates because the set $\cE$ becomes empty. In this case $\cE_{\ell'} = \varnothing$, which implies
\[
	h((E(\sigma^*) \cap \cE_{\ell'}) \cup E(\sigma_{\ell' - 1}))
	=
	h(E(\sigma_\ell))
	=
	f(\sigma_\ell).
\]
Using Lemma~\ref{lem:iteration_guarantee_hyper} for $s = \ell' = \ell + 1$, this observation implies
\[
	f(\sigma_\ell)
	\geq
	f(\sigma^*) - r\InDegree \cdot f(\sigma_{\ell})
	\Rightarrow
	f(\sigma_{\ell})
	\geq
	\frac{f(\sigma^*)}{r\InDegree + 1}
	\enspace,
\]
which proves the theorem. Thus, in the rest of the proof we may assume that Algorithm~\ref{alg:greedy_path_hyper} terminated because $\sigma$ reached a size larger than $k - r$.

Consider now the case that $r\InDegree + 1 \geq \InDegree k$. In this case
\begin{align*}
	f(\sigma_{\ell})
	\geq{} &
	f(\sigma_1)
	=
	f(\sigma_0) + [f(\sigma_1) - f(\sigma_0)]
	\geq
	f(\sigma_0) + \frac{f(\sigma^*) - f(\sigma_0)}{\InDegree k}\\
	\geq{} &
	\frac{f(\sigma^*)}{\InDegree k}
	\geq
	\frac{f(\sigma^*)}{r\InDegree + 1}
	\geq
	\frac{1 - e^{-(1 - \frac{r}{k})}}{r\InDegree + 1} \cdot f(\sigma^*)
	\enspace,
\end{align*}
where the first inequality holds since $\sigma_1$ is a prefix of $\sigma_\ell$ and the second inequality follows from Lemma~\ref{lem:iteration_guarantee_hyper}. Thus, it remains to prove the theorem in the more interesting case of $r\InDegree + 1 < \InDegree k$.

%
Observe that
\[
	[1 - (r\InDegree + 1) / (\InDegree k)]^\ell
	\leq
	e^{-(r + 1 / \InDegree) \cdot (\ell / k)}
	\enspace.
\]
Plugging this inequality into Corollary~\ref{cor:final_raw_hyper} gives
\begin{align*}
	f(\sigma_{\ell})
	\geq{} &
	\frac{f(\sigma^*)}{r\InDegree + 1} + \frac{[1 - (r\InDegree + 1) / (\InDegree k)]^\ell}{r\InDegree + 1} \cdot [(r\InDegree + 1)f(\sigma_0) - f(\sigma^*)]\\
	\geq{} &
	\frac{1 - [1 - (r\InDegree + 1) / (\InDegree k)]^\ell}{r\InDegree + 1} \cdot f(\sigma^*)
	\geq
	\frac{1 - e^{-(r + 1 / \InDegree) \cdot (\ell / 	k)}}{r\InDegree + 1} \cdot f(\sigma^*)
	\enspace.
\end{align*}
At this point we need a lower bound on $\ell$. One can note that $|\sigma|$ starts as $0$, increases by at most $r$ in each iteration of the loop of Algorithm~\ref{alg:greedy_path_hyper} and ends up with a value of at least $k - r + 1$ by our assumption. Thus, the number $\ell$ of iterations must be at least $(k - r + 1)/r$. Plugging this observation into the previous inequality gives
\[
	f(\sigma_{\ell})
	\geq
	\frac{1 - e^{-(r + 1 / \InDegree) \cdot (1 - (r - 1) / k)/r}}{r\InDegree + 1} \cdot f(\sigma^*)
	\geq
	\frac{1 - e^{-(1 - r / k)}}{r\InDegree + 1} \cdot f(\sigma^*)
	\enspace.
	\qedhere
\]
\end{proof}

\section{Applications} \label{apps}
\subsection{Movie Recommendation}\label{moviesec}

In this application, we use the \emph{Movielens 1M} dataset~\citep{movielens1m} to recommend movies to users based on the films they have reviewed in the past. This dataset contains 1,000,209 anonymous, time-stamped ratings made by 6,040 users for 3,706 different movies. As in \cite{seq17}, we do not want to predict a user's rating for a given movie, instead we want to predict which movies the user will review next. 

One issue with this dataset is that the distribution of the number of ratings per user (shown in Figure~\ref{userDist1}) has a very long tail, with the most prolific reviewer having reviewed 2,314 movies. In order for our data to be representative of the general population, we remove all users who have rated fewer than 20 movies or more than 50 movies. We also remove all movies with fewer than 1,000 reviews. This leaves us with 67,757 ratings made by 2,047 users for 207 different movies. 

We first group and sort all the reviews by user and time-stamp, so that each user $i$ has an associated sequence $\sigma^i$ of movies they have rated, where $\sigma^i_j$ refers to the $j^{th}$ movie that user $i$ has reviewed. We use a 90/10 training/testing split of the data and 10-fold cross validation.

For each user $i$ in the test set ($D_{test}$), we use their first 8 movies as a given starting sequence $S_i = \{ \sigma_1^i \hdots \sigma_8^i \}$. We want to use $S_i$ to select $k$ movies that we think user $i$ will review in the future. Therefore, for each user $i$, we build a hypergraph $H_i = (V,E_i)$, where $V = \{ v_1, \hdots, v_n \}$ is the set of all movies, and $E_i$ is a set of hyperedges. Each hyperedge $e_s$ has value $p_s$, where $s$ is a movie sequence of length at most 3. Intuitively, $p_s$ is the conditional probability of reviewing the last movie in $s$ given that the rest of the movies in $s$ have already been reviewed in the proper order.

Since we use empirical frequencies in the training data to calculate these conditional probabilities, we may run into the issue of overfitting to rare sequences. To avoid this, we add a parameter $d$ to the denominator of our calculation of each edge value. This will increase the relative value for sequences that appear more often. In this experiment, we use $d=20$.

More formally, define $N_s$ to be the number of users in the training set ($D_{train}$) that have reviewed all the movies in the sequence $s$ in the proper order. Also define $s_l$ to be last element in $s$, and $s'$ to be $s$ with $s_l$ removed. Now we can define the value of each edge $e_s$ as follows:

\begin{equation} 
\label{cp}
p_s = \begin{cases} 
      \displaystyle \frac{N_s}{N_{s'} + d} & s' \subseteq S_i \enspace, \vspace{0.1in}\\
      \displaystyle p_{s'} \frac{N_s}{N_{s'} + d} & \text{otherwise} \enspace.
   \end{cases}
\end{equation}

As mentioned above, the idea is that $p_s$ represents the conditional probability of reviewing $s_l$ given that all the movies in $s'$ have already been reviewed in the proper order. If user $i$ has not reviewed all the movies in $s'$, then we scale down the value of that edge by $p_{s'}$ (i.e., the conditional probability of reviewing all the movies in $s'$).

Note that if $s' = \varnothing$, then we define $N_{s'} = |D_{train}|$, thus ensuring that this definition also applies for self-loops. A small subgraph of a fully trained hypergraph is shown in Figure \ref{subgraph}.

We use a probabilistic coverage utility function as our non-negative monotone submodular function $h$. Mathematically, 

\[
h(E) = \sum_{v \in nodes(E)} \Big[ 1 - \prod_{s \in E \mid s_l = v} (1 - p_{s}) \Big]
\]

We compare the performance of our algorithms, Sequence-Greedy and Hyper Sequence-Greedy, to the existing submodular sequence baseline (OMegA), as well as a naive baseline (Frequency), which just outputs the most popular movies that the user has not yet reviewed. 

\begin{figure*}[t]
\centering
\subfloat[]{\includegraphics[height=1.9in]{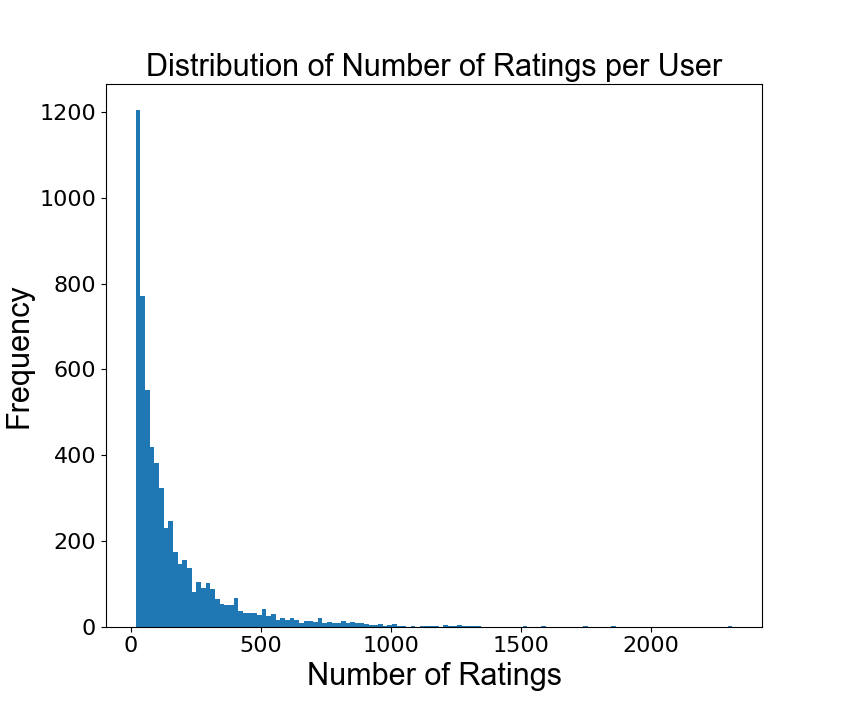}\label{userDist1}}
\hspace{-0.11in}
\subfloat[]{\includegraphics[height=1.8in]{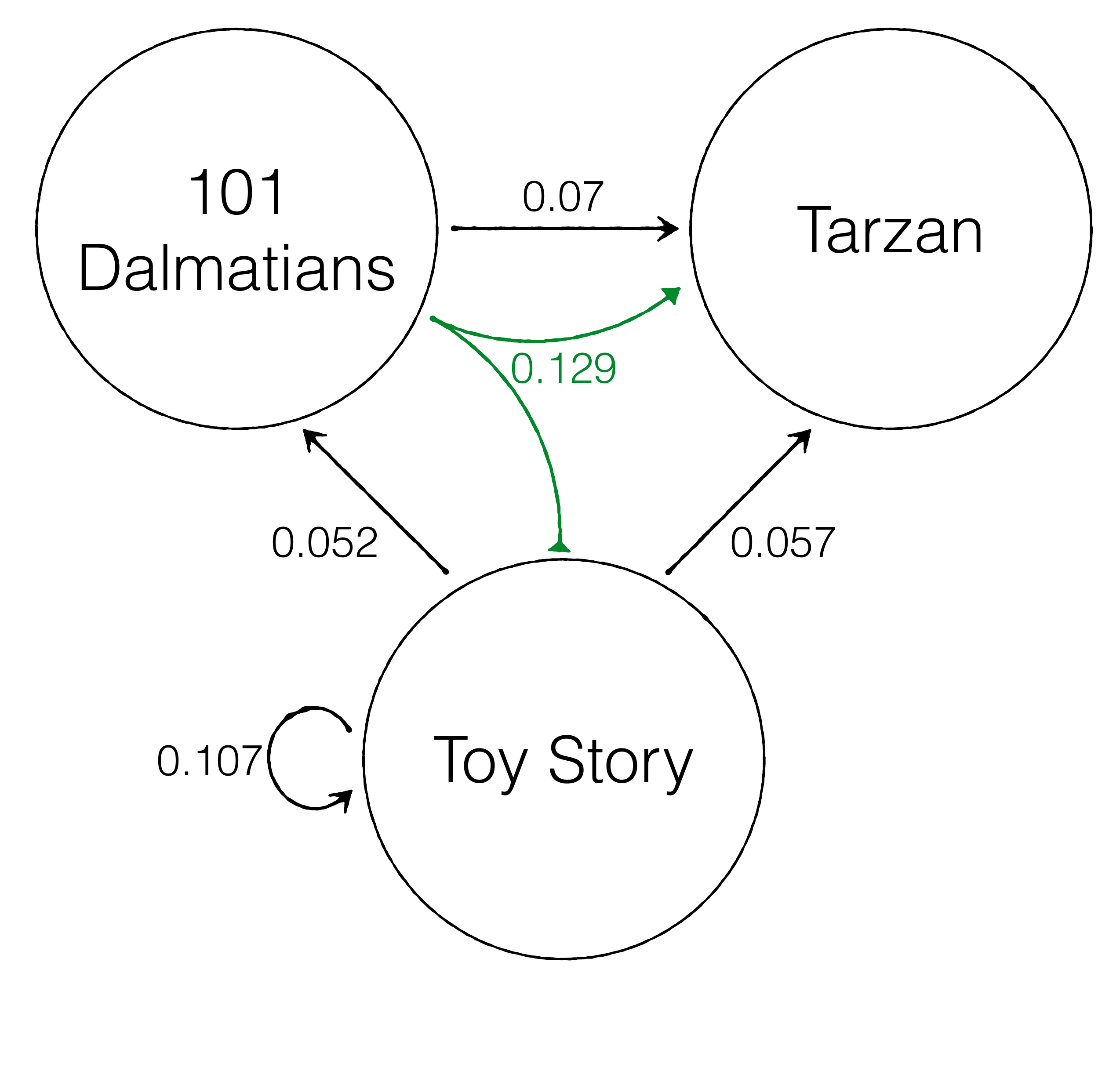}\label{subgraph}}
\subfloat[]{\includegraphics[height=1.9in]{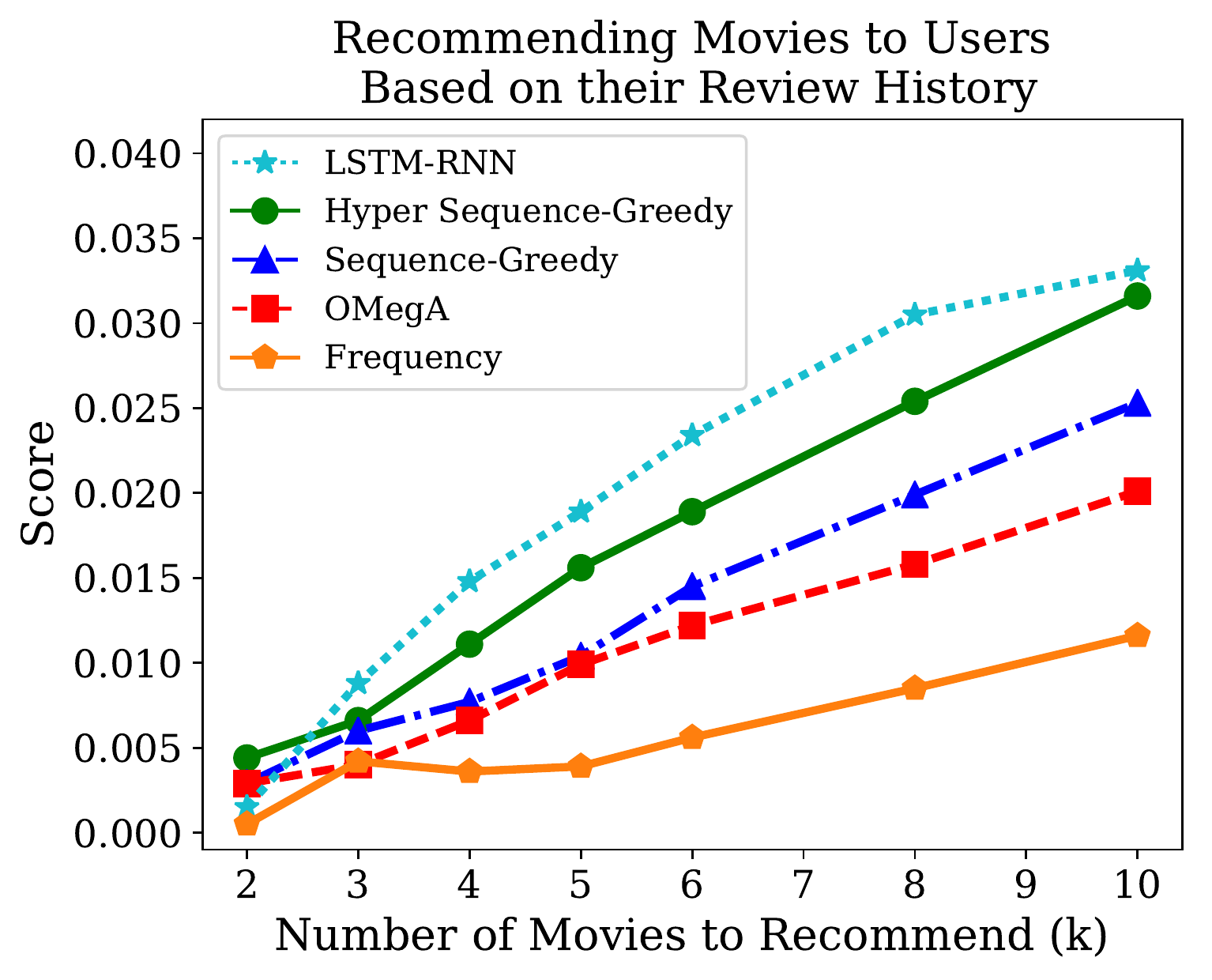}\label{movieGraph}}
\caption{(a)~Shows the long-tailed distribution of the number of ratings per user in the \emph{Movielens 1M} dataset. (b)~Shows a small subgraph of the overall hypergraph $H$ that we train. For clarity, we only show edges with value $p_s > 0.05$, as defined in equation~$(\ref{cp})$. We also highlight the size 3 hyperedge in green. (c)~Shows the performance of our algorithms against existing baselines for various cardinalities $k$.}
\label{fig:movies}
\end{figure*}

We also compare to a simple long short-term memory (LSTM) recurrent neural network (RNN). In addition to tuning parameters, we experimented with various frameworks such as training on uniform vs. variable-sized sequences. In the end, we obtained the best results when we trained the neural network on the first $k$ movies of each $\sigma^{i}$, where the target is to predict the next $k$ movies that the user $i$ will review. In terms of the architecture, we use one layer of 512 LSTM nodes (with a dropout of 0.5) followed by a dense layer with a softmax activation that returns a  $207 \times 1$ vector $P$, where entry $P_i$ is the probability that movie $i$ will be reviewed. For each $k$, we simply return the $k$ highest values in $P$. As before, we used a 90/10 training/testing split with 10-fold cross validation.

We hypothesize that, with enough data, neural networks will outperform our algorithms. However, with this comparison, we would like to show that in situations where data is relatively scarce, our algorithms are competitive with existing neural network frameworks, while also providing theoretical guarantees.

To measure the accuracy of a prediction, we use a modified version of the Kendall tau distance \citep{kendallTau}. First, for any sequence $\sigma$, we define $T(\sigma)$ to be the set of all ordered pairs in $\sigma$. For example, if $\sigma = \{1,3,2\}$, then $T(\sigma) = \big[ (1,3),(1,2),(3,2) \big]$. 


Let $P_i$ be our predicted sequence for the next $k$ movies that user $i$ will review, and let $Q_i$ be the next $k$ movies that user $i$ actually reviewed. Then, we define the accuracy of the prediction $P_i$ as follows.

\[
\tau(P_i,Q_i) = \frac{|T(P_i) \cap T(Q_i)|}{|T(Q_i)|}
\]

In other words, $\tau(P_i,Q_i)$ is the fraction of ordered pairs of the true answer $Q_i$ that appear in our prediction $P_i$. Our experimental results in terms of this accuracy measure are summarized in Figure \ref{movieGraph}.

These results showcase the power of using hypergraphs, as Hyper Sequence-Greedy consistently outperforms Sequence-Greedy. We also notice that Hyper Sequence-Greedy outperforms the score of the existing baseline OMegA by roughly 50\%.

\vspace{-0.1in}
\subsection{Online Link Prediction}\label{wikisec}

In this application, we consider users who are searching through Wikipedia for some target article. Given a sequence of articles they have previously visited, we want to predict which link they will follow next. We use the Wikispeedia dataset~\citep{wikispeedia}, which consists of 51,138 completed search paths on a condensed version of Wikipedia that contains 4,604 articles and 119,882 links between them. 

The setup for this problem is similar to that of section \ref{moviesec}, so we will only go over the main differences. Again we will use a 90/10 training/testing split of the data with 10-fold cross validation.

For each training set $D_{train}$, we build the underlying hypergraph $H = (V,E)$. This time, $V$ is the set of all articles, and $E$ is a set of hyperedges $e_s$ where $p_s$ is the conditional probability of moving to article $s_l$ given that the user had just visited $s'$ in succession.

\begin{figure}[h]
\vspace{0.15in}
\begin{center}
\includegraphics[scale = 0.24]{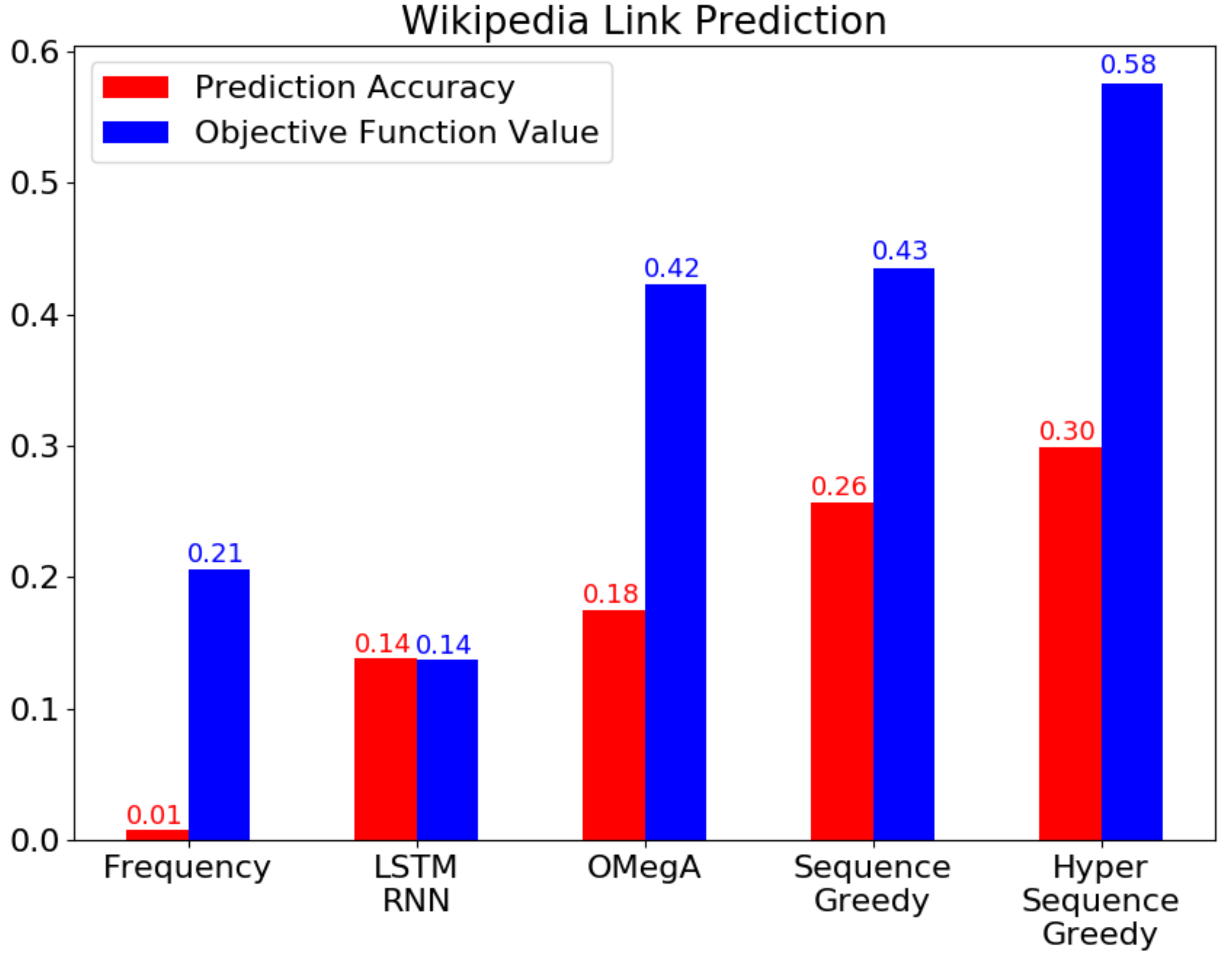}
\end{center}
\vspace{-.05in}
\caption{Given a sequence of articles a Wikipedia user has visited, we want to predict the next link they will click. This bar chart shows the prediction accuracy, as well as the objective function value, of various algorithms.}
\label{wikiGraph}
\end{figure}

\newpage
For each testing set $D_{test}$, we will use the last article in each completed path as the target, and the previous 3 articles as the given sequence. This means we will be able to use hyperedges of up to size 4. We employ the same probabilistic coverage function $h$ and the same baseline comparisons as in Section \ref{moviesec}. For this application, our neural network was most effective when we used a single layer of 32 LSTM nodes (with a dropout of 0.2). Our results are shown in Figure~\ref{wikiGraph}.

In this case, Hyper Sequence-Greedy exhibits the best performance. 
We see that the simple neural network implementation is outperformed by Hyper Sequence-Greedy as well as by some of the baselines. This is likely a result of the data in this experiment being more sparse. Although in this application we technically have more data than in the previous one, here we attempt to choose between 4,604 articles, rather than just 207 movies.

We also show the results that the various algorithms achieve when evaluated on our objective function $f(\sigma) = h\big( E(\sigma) \big)$. 
Asides from the LSTM-RNN, which doesn't consider the objective function at all, we see that the objective function values are relatively in line with the prediction accuracy. This demonstrates that the probabilistic coverage function was a good choice for the objective function.

\subsection{Course Sequence Design}

In this final application we want to use historical enrollment data in Massive Open Online Courses (MOOCs) to generate a sequence of courses that we think would be of interest to users. We use a publicly available dataset \citep{edx14} that covers the first year of open online courses offered by edX. The dataset consists of 641,139 registrations from 476,532 unique users across 13 different online courses offered by Harvard and MIT. Amongst a plethora of other statistics, the data contains information on when each user first and last accessed each course, how many course chapters they accessed, and the grade they achieved if they were ultimately certified (i.e., fully completed) in the course. 


One natural way to think about the value of a sequence of courses is in terms of prerequisites. That is, in what order should we offer courses to students in order to help them learn as much as possible. This model comes with a natural measure of success as well, which is the grade each student gets in each course. Unfortunately, out of the 476,532 unique users in this dataset only 180 were certified (and thus, received grades) in 3 or more courses. Furthermore, this dataset only contains 13 different courses (shown in Figure \ref{teach1}), none of which are logical prerequisites for each other.

Instead, we can think about a sequence of courses being valuable if they will all be interesting to a user who registers for them. Similarly to the prerequisites model where the order of courses affects the user's grade, the order in which a user registers for courses should also affect their interest. In this dataset, we can measure interest by the percentage of the course that the user accessed. In particular, we say that if a user was interested in a course $i$ if she accessed at least one-third of all the chapters for course $i$. 

As always, we need to build the underlying hypergraph $H = (V,E)$ for each training set. In this case, $V$ is the set of all courses and $E$ is a set of hyperedges of form $e_s$, where $s$ is a sequence of at most 3 courses and $p_s$ is the probability that a user will be interested in $s_l$ given that she previously showed interest in $s'$ in the proper order. Recall that $s_l$ is the last course in $s$, and $s'$ is the sequence obtained from $s$ after deleting $s_l$. As in section \ref{moviesec}, we also use a parameter $d$ to avoid overfitting to rare sequences. In this case we use $d = 100$.
However, unlike Section \ref{moviesec}, we are not making recommendations based on a user's history. Instead each algorithm will use the underlying hypergraph to build a single sequence $\sigma$. Since we are not starting with any given sequence, we can finally run Sequence-Greedy and Hyper Sequence-Greedy both forwards and backwards, and take the maximum of the two results.

Different users will naturally have different interests, so it is unreasonable to expect that any single sequence $\sigma$ will work for all users. However, if $\sigma$ is a ``good" sequence, we could expect that users who start all the courses in $\sigma$ in the correct order ultimately end up showing interest in those courses. Intuitively, the idea is that $\sigma$ should capture a sequence of courses with some common theme and present them in the best possible order. Therefore, if a user begins all the courses in $\sigma$ they likely have some interest in this common theme. Hence, if $\sigma$ is a good sequence, it will present these courses in a good order and properly pique the interest of these users. 

Mathematically, we define $S_{\sigma}$ to be the set of users who started all the courses in $\sigma$ in the proper order, and $c_{ij}$ to be the percentage of course $j$ that user $i$ completed. Therefore, the value of $\sigma$ for a given test set $D_{test}$ is defined as: 
\vspace{-0.05in}
\[
D_{test}(\sigma) = \displaystyle \frac{ \displaystyle \sum_{i \in S_{\sigma}} \sum_{j \in \sigma} c_{ij}}{\big( |S_{\sigma}| +d \big) |\sigma| }
\]

Using a 75/25 training/testing split of the data and 4-fold cross validation, we compare the effectiveness of Hyper Sequence-Greedy, Sequence Greedy, OMegA, and Frequency for the task of selecting a sequence of 4 courses. 
Note that due to the inherent randomness in the training/testing split, there is some variance in the results. To be conservative, the results shown in Figure \ref{teach2} are actually on the lower end of the performance we see from our algorithms. Figure \ref{teach3} shows some representative sequences.

\begin{figure*}[tbp]
\centering
\subfloat[]{\includegraphics[height=1.72in]{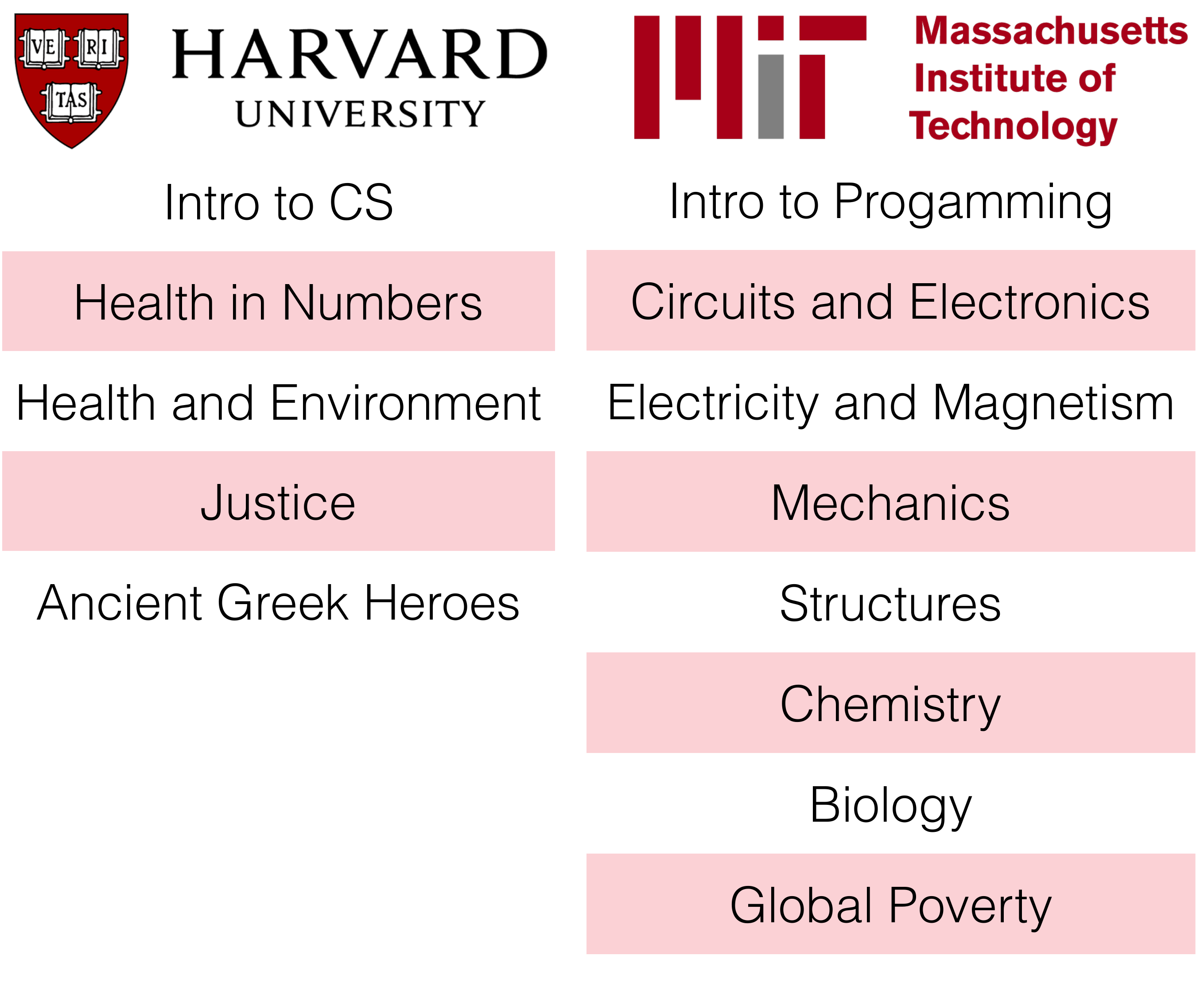}\label{teach1}}
\hspace{0.05in}
\subfloat[]{\includegraphics[height=1.72in]{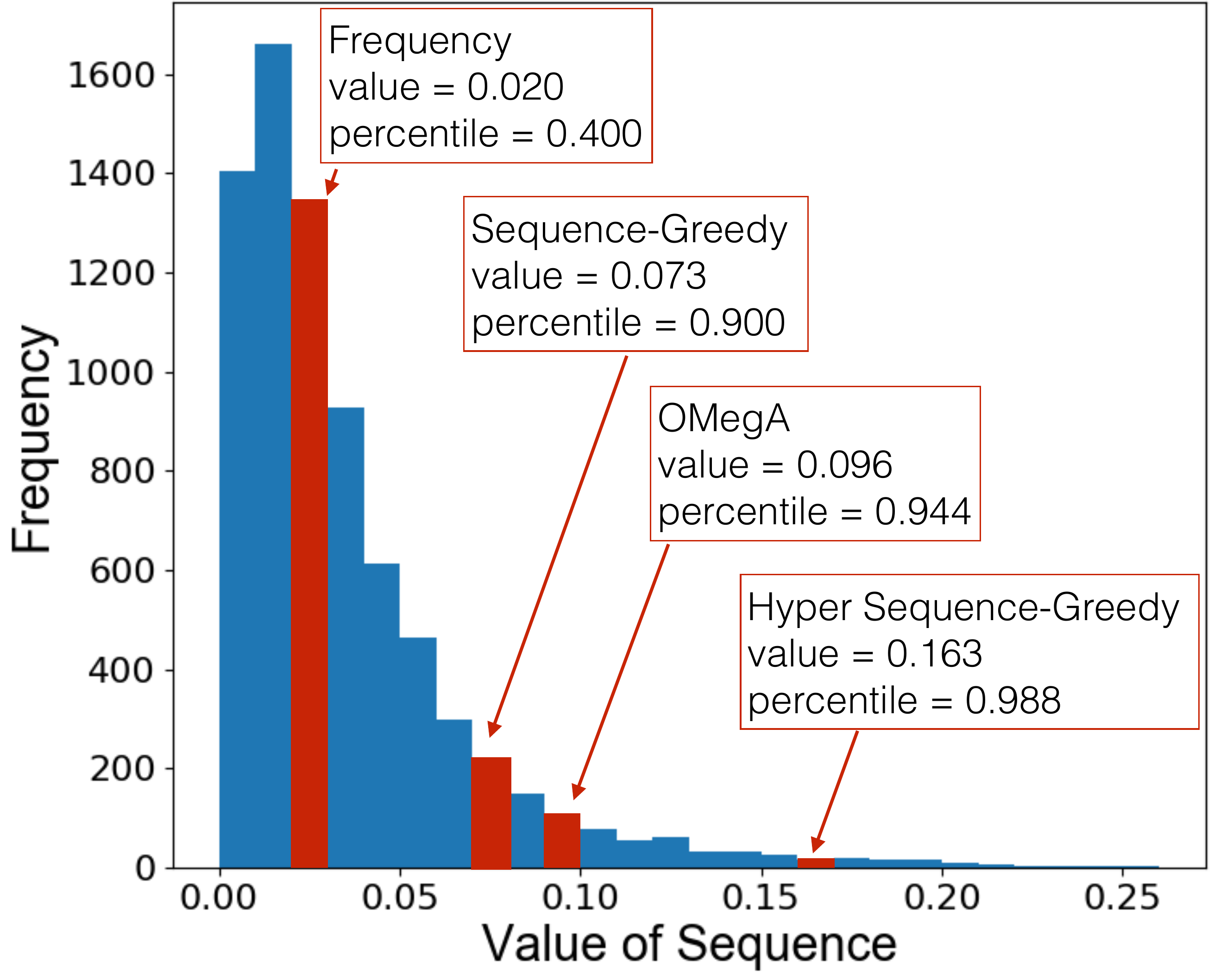}\label{teach2}}
\hspace{0.05in}
\subfloat[]{\includegraphics[height=1.72in]{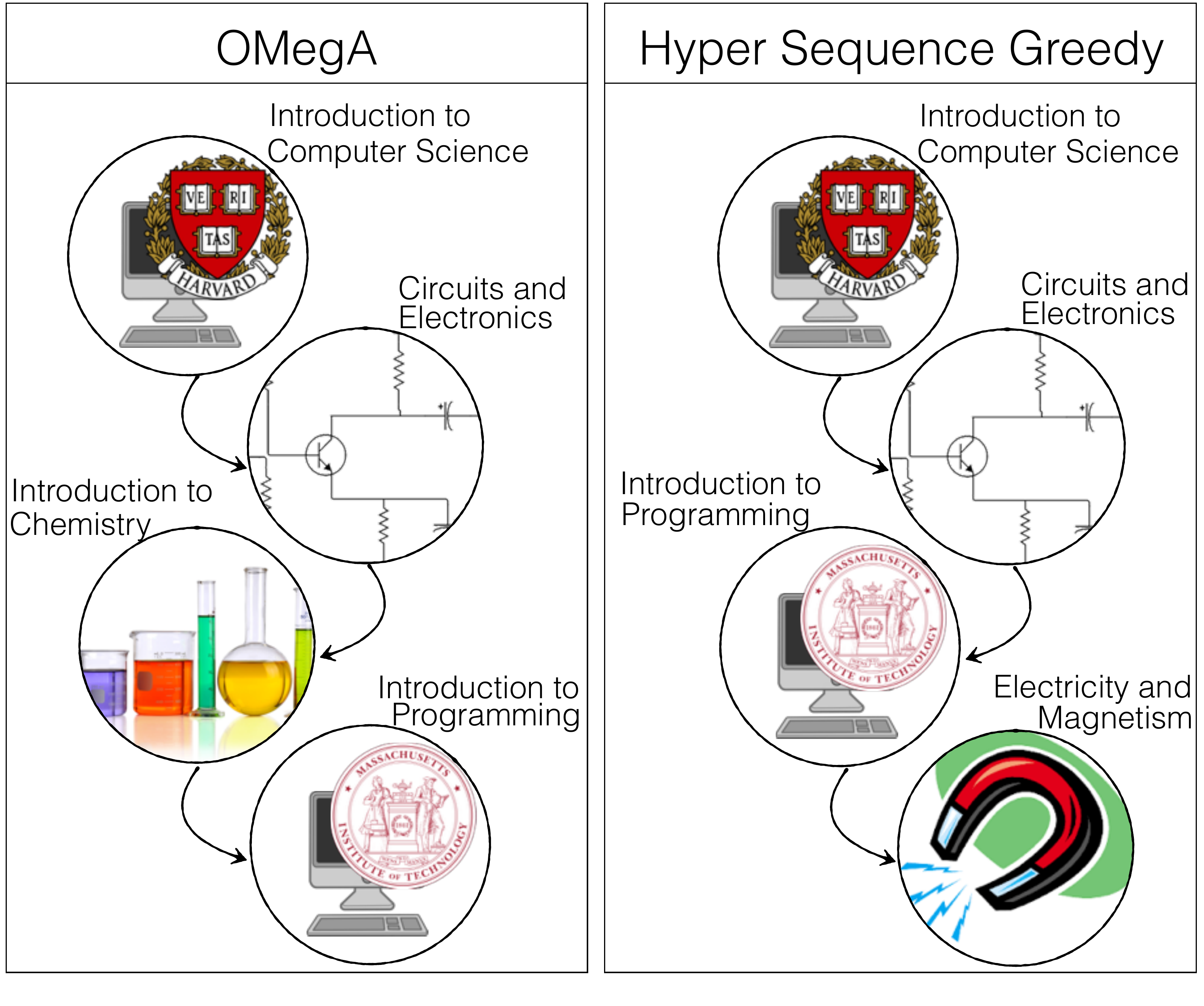}\label{teach3}}
\caption{(a) shows the 13 different courses that were available to students in this dataset. (b) is a histogram of the value of every 4 course sequence that appears in the dataset. The values of the courses selected by the various algorithms are overlayed on top of the corresponding bar in the histogram. (c) shows representative course sequences selected by OMegA and Hyper Sequence-Greedy.}
\label{fig:teach}
\end{figure*}

We see that Hyper Sequence-Greedy outperforms the other algorithms, as expected. From the histogram, we also see that Hyper Sequence-Greedy tends to select one of the best possible sequences, with Sequence-Greedy and OMegA both performing in the $90^{th}$ percentile.  Somewhat surprisingly, OMegA (which has to use a random topological order in the absence of a directed acyclic graph) outperforms Sequence-Greedy. However, this may be explained by the fact that $k=4$ is relatively small. Unfortunately, only 1,153 users even started more than 4 courses, meaning that we cannot effectively test sequences of larger length with this dataset. 

\section{Conclusion}

This paper extended results on submodular sequences from directed acyclic graphs to general graphs and hypergraphs. Our theoretical results showed that both our algorithms, Sequence-Greedy and Hyper Sequence-Greedy, approach a constant factor approximation to the optimal solution (for constant $\Delta$).  Furthermore, we demonstrated the utility of our algorithms, in particular the power of using hyperedges, on real world applications in movie recommendation, online link prediction, and the design of course sequences for MOOCs.\\

\textbf{Acknowledgements} We acknowledge support from DARPA YFA (D16AP00046), AFOSR YIP (FA9550-18-1-0160), ISF grant 1357/16, and ERC StG SCADAPT.

%
%


\newpage
\bibliographystyle{abbrvnat}
\bibliography{references-sub}

%

\end{document}